\newcommand{\nfold}{$n$-fold}
\let\save@mathaccent\mathaccent
\newcommand*\if@single[3]{%
  \setbox0\hbox{${\mathaccent"0362{#1}}^H$}%
  \setbox2\hbox{${\mathaccent"0362{\kern0pt#1}}^H$}%
  \ifdim\ht0=\ht2 #3\else #2\fi
  }
\newcommand*\rel@kern[1]{\kern#1\dimexpr\macc@kerna}
\newcommand*\widebar[1]{\@ifnextchar^{{\wide@bar{#1}{0}}}{\wide@bar{#1}{1}}}
\newcommand*\wide@bar[2]{\if@single{#1}{\wide@bar@{#1}{#2}{1}}{\wide@bar@{#1}{#2}{2}}}
\newcommand*\wide@bar@[3]{%
  \begingroup
  \def\mathaccent##1##2{%
    \let\mathaccent\save@mathaccent
    \if#32 \let\macc@nucleus\first@char \fi
    \setbox\z@\hbox{$\macc@style{\macc@nucleus}_{}$}%
    \setbox\tw@\hbox{$\macc@style{\macc@nucleus}{}_{}$}%
    \dimen@\wd\tw@
    \advance\dimen@-\wd\z@
    \divide\dimen@ 3
    \@tempdima\wd\tw@
    \advance\@tempdima-\scriptspace
    \divide\@tempdima 10
    \advance\dimen@-\@tempdima
    \ifdim\dimen@>\z@ \dimen@0pt\fi
    \rel@kern{0.6}\kern-\dimen@
    \if#31
      \overline{\rel@kern{-0.6}\kern\dimen@\macc@nucleus\rel@kern{0.4}\kern\dimen@}%
      \advance\dimen@0.4\dimexpr\macc@kerna
      \let\final@kern#2%
      \ifdim\dimen@<\z@ \let\final@kern1\fi
      \if\final@kern1 \kern-\dimen@\fi
    \else
      \overline{\rel@kern{-0.6}\kern\dimen@#1}%
    \fi
  }%
  \macc@depth\@ne
  \let\math@bgroup\@empty \let\math@egroup\macc@set@skewchar
  \mathsurround\z@ \frozen@everymath{\mathgroup\macc@group\relax}%
  \macc@set@skewchar\relax
  \let\mathaccentV\macc@nested@a
  \if#31
    \macc@nested@a\relax111{#1}%
  \else
    \def\gobble@till@marker##1\endmarker{}%
    \futurelet\first@char\gobble@till@marker#1\endmarker
    \ifcat\noexpand\first@char A\else
      \def\first@char{}%
    \fi
    \macc@nested@a\relax111{\first@char}%
  \fi
  \endgroup
}
\newtheorem*{lemma*}{Lemma}
\newcommand{\Lang}[1]{%
  \ifmmode{%
    \text{\normalfont\textsc{#1}}%
  }%
  \else
  {\normalfont\textsc{#1}}%
  \fi}
\newacro{EPTAS}{efficient polynomial time approximation scheme}
\newacro{FPTAS}{fully polynomial time approximation scheme}
\newacro{ILP}{integer linear program}
\newacro{LP}{linear program}
\newacro{ETH}{exponential time hypothesis}
\DeclarePairedDelimiter{\norm}{\lVert}{\rVert}
\newcommand{\cO}{\mathcal{O}}
\DeclareMathOperator{\supp}{supp}
\authorrunning{K. Jansen, A. Lassota, L. Rohwedder}
\author{Klaus Jansen}{Department of Computer Science, Kiel University,  Kiel, Germany}{kj@informatik.uni-kiel.de}{}{}
\author{Alexandra Lassota}{Department of Computer Science, Kiel University,  Kiel, Germany}{ala@informatik.uni-kiel.de}{}{}
\author{Lars Rohwedder}{Department of Computer Science, Kiel University,  Kiel, Germany}{lro@informatik.uni-kiel.de}{}{}
\title{Near-Linear Time Algorithm for \nfold{} ILPs via Color Coding\footnote{This work was partially supported by DFG Project "Strukturaussagen und deren Anwendung in Scheduling- und Packungsprobleme", JA 612/20-1}}
\begin{document}
\maketitle

\begin{abstract}
We study an important case of ILPs $\max\{c^Tx \ \vert\ \mathcal Ax = b, l \leq x \leq u,\, x \in \mathbb{Z}^{n t} \} $ with $n\cdot t$ variables and lower and upper bounds $\ell, u\in\mathbb Z^{nt}$.
In \nfold{} ILPs non-zero entries only appear in the first $r$ rows of the matrix $\mathcal A$ and in small blocks of size $s\times t$ along the diagonal underneath. Despite this restriction many optimization problems can be expressed in this form. It is known that \nfold{} ILPs can be solved in FPT time
regarding the parameters $s, r,$ and $\Delta$, where $\Delta$ is the greatest absolute value of an entry in $\mathcal A$.
The state-of-the-art technique is a local search algorithm that subsequently moves in an improving direction.
Both, the number of iterations and the search for such an improving direction take time $\Omega(n)$, leading to a quadratic running time in $n$.
We introduce a technique based on Color Coding, which allows us to compute these improving directions in logarithmic time after a single initialization step. 
This leads to the first algorithm for \nfold{} ILPs with a running time that is near-linear in the number $nt$ of variables,
namely
$(rs\Delta)^{\cO(r^2s + s^2)} L^2 \cdot nt \log^{\cO(1)}(nt)$, where $L$ is the encoding length of the largest integer in the input.
In contrast to the algorithms in recent literature,
we do not need to solve the LP relaxation in order to handle unbounded variables.
Instead, we give a structural lemma to introduce appropriate bounds. 
If, on the other hand, we are given such an LP solution, the running time can be decreased by a factor of $L$.
\subjclass{\ccsdesc[500]{Mathematics of computing~Integer programming}}
\keywords{Near-Linear Time Algorithm, \nfold{}, Color Coding}
\end{abstract}

\section{Introduction}
Solving integer linear programs of the form $\max\,\{c^Tx \ \vert\ \mathcal Ax = b, x\in\mathbb Z_{\ge 0}\}$ is one of the most fundamental tasks in optimization. 
This problem is very general and broadly applicable, but unfortunately also very hard.
In this paper we consider \nfold{} ILPs, a class of integer linear programs with a specific block structure. This is, when non-zero entries appear only in the first $r$ rows of $\mathcal A$ and in blocks of size $s \times t$ along the diagonal underneath.
More precisely, an \nfold{} matrix has the form
\begin{equation*}
\mathcal A =
\begin{pmatrix}
A_1	& A_2	& \dots	& A_n      \\
B_1	& 0 	& \dots  	& 0 	  \\
0	&B_2	&\dots 	& 0	\\
\vdots	& \vdots 	& \ddots & \vdots \\
0 	& 0 & \dots 	 & B_n
\end{pmatrix},
\end{equation*}
where $A_1,\dotsc,A_n$ are $r \times t$ matrices and $B_1,\dotsc,B_n$ are $s \times t$ matrices.
In \nfold{} ILPs we also allow upper and lower bounds on the variables. Throughout the paper we subdivide a solution $x$ into bricks of length $t$ and denote by $x^{(i)}$ the $i$-th one.
The corresponding columns in $\mathcal A$ will be called blocks.

Lately, \nfold{} ILPs received great attention~\cite{altmanova, klein, maack, knop, koutecky} and were studied intensively due to two reasons.
Firstly, many optimization problems are expressible as \nfold{} ILPs~\cite{de2008n, hemmecke2013n, maack, knop}. Secondly, \nfold{} ILPs indeed can be solved much more efficiently than arbitrary ILPs~\cite{klein, hemmecke2013n, koutecky}.
The previously best algorithm has a running time of  $(rs\Delta)^{\cO(r^2s+rs^2)} L \cdot (nt)^2 \log^2(n\cdot t) + \mathrm{LP}$ and is due to Eisenbrand et al.~\cite{klein}. Here $\mathrm{LP}$ is the running time required for solving the corresponding LP relaxation. This augmentation algorithm is the last one in a line of research, where local improvement/augmenting steps are
used to converge to an optimal solution. Clever insights about the structure of the
improving directions allow them to be computed fast.
Nevertheless, the dependence on $n$ in the algorithm above is still high.
Indeed, in practice a quadratic running time is simply not suitable for large data sets \cite{andoni2012approximating, drake2005linear, kelner2014almost}. For example when analyzing big data, large real world graphs as in telecommunication networks or DNA strings in biology, the duration of the computation would go far beyond the scope of an acceptable running time \cite{andoni2012approximating, drake2005linear, kelner2014almost}. For this reason even problems which have an algorithm of quadratic running time are still studied from the viewpoint of approximation algorithms with the objective to obtain results in subquadratic time, even for the cost of a worse quality \cite{andoni2012approximating, drake2005linear, kelner2014almost}. 
Hence, it is an intriguing question, whether the quadratic dependency on the number $nt$ of variables can be eliminated. In this paper, we answer this question affirmatively.
The technical novelty comes from a surprising area: We use a combinatorial structure called splitter, which has been used to derandomize Color Coding algorithms.
It allows us to build a powerful data structure that is maintained during the
local search and from which we can derive an improving direction in logarithmic time.
Handling unbounded variables in an \nfold{} is a non-trivial issue
in the previous algorithms from literature.
They had to solve the corresponding LP relaxation and use proximity results.
Unfortunately, it is not known whether linear programming can be solved in near-linear time in the number
of variables.
Hence, it is an obstacle for obtaining a near-linear running time.
We manage to circumvent the necessity of solving the LP by introducing artificial bounds as a function of the finite upper bounds and the right-hand side of the \nfold{}.
\paragraph*{Summary of Results}
\begin{itemize}
\item We present an algorithm, which solves \nfold{} ILPs in time $(rs\Delta)^{\cO(r^2 s + s^2)} L \cdot nt \log^4(nt) + \mathrm{LP}$, where $\mathrm{LP}$ is the time to solve the LP relaxation of the \nfold{}. This is the first algorithm with a near-linear dependence on the number of variables. The crucial step is to speed up the computation of the improving directions. 
\item We circumvent the need for solving the LP relaxation. This leads to a purely combinatorial algorithm with running time $(rs\Delta)^{\cO(r^2 s + s^2)} L^2 \cdot nt \log^6(nt)$.
\item In the running times above the dependence on the parameters, i.e., $(rs\Delta)^{\cO(r^2 s + s^2)}$, improves
 on the function $(rs\Delta)^{\cO(r^2 s + rs^2)}$ in the previous best algorithms.
\end{itemize}
\paragraph*{Outline of New Techniques}
We will briefly elaborate the main technical novelty in this paper.
Let $x$ be some feasible, non-optimal solution for the \nfold{}.
It is clear that when $y^*$ is an optimal solution for $\max\{c^T y \ | \ \mathcal A y = 0, \ell - x \le y \le u - x, y\in\mathbb Z^{nt}\}$, then $x + y^*$ is optimal for the initial \nfold{}.
In other words, $y^*$ is a particularly good improving step. A sensible approximation
of $y^*$ is to consider  directions $y$ of small size and multiplying them by some step length,
i.e., find some $\lambda \cdot y$ with $\lVert y \rVert_1\le k$ for a value $k$ depending only on $\Delta, r,$ and $s$.
This implies that at most $k$ of the $n$ blocks are used for $y$. If we randomly color the blocks into
$k^2$ colors, then with high probability at most one block of every color is used.
This reduces the problem to choosing a solution of a single brick for every color and
to aggregate them.
We add data structures for every color to implement this efficiently.
There is of course a chance that the colors do not split $y$ perfectly. We handle this
by using a deterministic structure of multiple colorings (instead of one) such that it is
guaranteed that at least one of them has the desired property.

\paragraph*{Related Work} 
The first XP-time algorithm for solving \nfold{} integer programs is due to De Loera et al.~\cite{de2008n} with a running time of  $n^{g(\mathcal A)} L$. Here $g(\mathcal A)$ denotes a so-called Graver complexity of the constraint matrix $\mathcal A$ and $L$ is the encoding length of the largest number in the input. This algorithm already uses the idea of iterative converging to the optimal solution by finding improving directions. Nevertheless, the Graver complexity appears to be huge even for small \nfold{} integer linear programs and thus this algorithm was of no practical use~\cite{hemmecke2013n}. The exponent of this algorithm was then greatly improved by Hemmecke et al. in~\cite{hemmecke2013n} to a constant factor yielding the first cubic time algorithm for solving \nfold{} ILPs. More precisely, the running time of their algorithm is $\Delta^{\cO(t(rs+st))} L\cdot (nt)^3$, i.e., FPT-time parameterized over $\Delta, r, s,$ and $t$.
Lately, two more breakthroughs were obtained. One of the results is due to Kouteck{\'y} et al.~\cite{koutecky}, who gave a strongly polynomial algorithm with running time $\Delta^{\cO(r^2s + rs^2)}(nt)^6 \cdot \log(nt) + LP$. Here $LP$ is the running time for solving the corresponding LP relaxation, which is possible in strongly polynomial time, since the entries of the matrix are bounded. Simultaneously, Eisenbrand et al. reduced in~\cite{klein} the running time from a cubic factor to a quadratic one by introducing new proximity and sensitivity results. This leads to an algorithm with running time $(\Delta r s)^{\cO(r^2s + r s^2)} L \cdot (nt)^2 \log^2(nt) + LP$. Note that both results require only polynomial dependency on $t$.

As for applications, \nfold{} ILPs are broadly used to model various problems.
We refer to the works~\cite{de2008n, hemmecke2013n, DBLP:journals/ol/HemmeckeOW11, maack, DBLP:conf/esa/KnopKM17, DBLP:journals/siamdm/OnnS15} and the references therein for an overview. 

\paragraph*{Structure of the Document}
In Section~\ref{sec-preliminaries} we introduce the necessary preliminaries.
Section~\ref{sec-aug} gives the algorithm for efficiently computing the augmenting steps.
This is then integrated into an algorithm for \nfold{} ILPs in Section~\ref{sec-alg}.
At first we require finite variable bounds and then discuss how to eliminate this requirement
using the solution of the LP relaxation.
Finally, in Section~\ref{sec-bound} we discuss how to handle infinite variable bounds without
the LP relaxation and give new structural results.

\section{Preliminaries}
\label{sec-preliminaries}
In the following we introduce \nfold{}s formally and state the main results regarding them. Further we familiarize splitters, a technique known from Color Coding. 
\begin{definition}
Let $n,r,s,t \in \mathbb N$. Furthermore let $A_1,\dotsc,A_n$ be $r \times t$ integer matrices and $B_1,\dotsc,B_n$ be $s \times t$ integer matrices. Then an \nfold{} $\mathcal A$ is of following form:
\begin{equation*}
\mathcal A =
\begin{pmatrix}
A_1	& A_2	& \dots	& A_n      \\
B_1	& 0 	& \dots  	& 0 	  \\
0	&B_2	&\dots 	& 0	\\
\vdots	& \vdots 	& \ddots & \vdots \\
0 	& 0 & \dots 	 & B_n
\end{pmatrix}.
\end{equation*}
The matrix $\mathcal A$ is of dimension $(r+n\cdot s) \times n\cdot t$. We will divide $\mathcal A$ into blocks of size $(r+n\cdot s) \times t$. Similarly, the variables of a solution $x$ are partitioned into bricks of length $t$.
This means each brick $x^{(i)}$ corresponds to the columns of one submatrix $A_i$ and therefore also $B_i$.
Given $c, \ell, u \in \mathbb{Z}^{n\cdot t}$ and $b \in \mathbb{Z}^{r+n\cdot s}$, the corresponding \nfold{} Integer Linear Programming problem is defined by:
\begin{align*}
 \max\,\{c^Tx \ \vert\ \mathcal Ax = b, \ell \leq x \leq u,\, x \in \mathbb{Z}^{n\cdot t} \} .
\end{align*}
\end{definition}
The main idea for the state-of-the-art algorithms relies on some insight about
the Graver basis of \nfold{}s, which are special elements of the kern of $\mathcal A$.
More formally, we introduce the following definitions:
\begin{definition}
The kern of a matrix $A$ is defined as the set of integral vectors $x$ with $Ax = 0$.
We write $\mathrm{kern}(A)$ for them.
\end{definition}

\begin{definition}
A Graver basis element $g$ is a minimal element of $\mathrm{kern}(A)$.
An element is minimal, if it is not the sum of two sign-compatible elements
$u, v \in \mathrm{kern}(A)$.
\end{definition}
Here, sign-compatible means that $u_i < 0$ if and only if $v_i < 0$ for every $i$.
\begin{theorem}[\cite{cook1986integer}] \label{Graver}
Let $A\in\mathbb Z^{n\times m}$ and
let $x\in\mathrm{kern}(A)$.
Then there exist $2n-1$ Graver basis elements $g_1,\dotsc,g_{2n-1}$, which are sign-compatible with $x$ such that
\begin{align*}
  x = \sum\nolimits_{i=1}^{2n-1} \lambda_i g_i
\end{align*}
for some $\lambda_1,\dotsc,\lambda_{2n-1}\in\mathbb Z_{\ge 0}$.
\end{theorem}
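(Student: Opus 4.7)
The plan is to follow the integer Carath\'eodory strategy of Cook, Fonlupt and Schrijver cited in the statement. Let $\sigma_j := \sign(x_j) \in \{-1,0,+1\}$ be the sign pattern of $x$, and consider the rational polyhedral cone
\[
C_x := \{y \in \mathrm{kern}(A) \otimes \mathbb{Q} : \sigma_j y_j \ge 0 \text{ for all } j,\ y_j = 0 \text{ whenever } \sigma_j = 0\}.
\]
The first step is to verify that $C_x$ is generated as a rational cone by the set $G^+$ of Graver basis elements that are sign-compatible with $x$: any integer point on an extreme ray of $C_x$ is Graver-minimal, because a sign-compatible splitting $g = u+v$ with $u,v \in \mathrm{kern}(A) \setminus \{0\}$ would keep both summands inside $C_x$ as well. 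Since $x \in C_x$, the usual (real) Carath\'eodory theorem then yields
\[
x = \sum_{i=1}^k \mu_i g_i, \qquad \mu_i \in \mathbb{Q}_{>0},\ g_i \in G^+,\ k \le \dim C_x \le n.
\]

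Second, I would round the rational coefficients to integers. Writing $\mu_i = \lfloor \mu_i \rfloor + \{\mu_i\}$, set $z := x - \sum_i \lfloor \mu_i \rfloor g_i = \sum_i \{\mu_i\} g_i$. Since $x$ and the $g_i$ are integer, so is $z$; by construction $z$ is still sign-compatible with $x$ and lies in $\mathrm{kern}(A)$, hence $z \in C_x$. The integer part $\sum_i \lfloor \mu_i \rfloor g_i$ already contributes a non-negative integer combination of at most $k \le n$ Graver elements of $G^+$, leaving us to handle the remainder $z$.

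The main obstacle is bounding the total number of distinct Graver elements used by $2n-1$ rather than some larger function of $n$. Here I would invoke the Cook--Fonlupt--Schrijver trick: the remainder $z$ lies in the half-open parallelepiped spanned by $g_1,\dots,g_k$, and a fresh Carath\'eodory step on $z$ can be arranged to reuse all but one of the current directions, so each rounding round introduces at most one genuinely new Graver element. Iterating the rounding at most $n-1$ times drives the remainder to zero, and we end up with at most $k + (n-1) \le 2n-1$ sign-compatible Graver elements with non-negative integer coefficients, which is exactly the claim. I would formalize the iteration via an induction on a potential such as the number of fractional components in the current rational representation; the delicate point is the merging step that guarantees only one new direction is added per round, and this is the technical heart of the argument that I expect to consume most of the work.
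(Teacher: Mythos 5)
The paper offers no proof of this statement at all---it is imported verbatim from the cited work of Cook, Fonlupt and Schrijver---so your proposal has to be judged on its own merits rather than against anything in the paper. Your first two steps are sound and constitute the standard reduction: the cone $C_x$ is pointed, the \emph{primitive} integral generators of its extreme rays are Graver elements (note that primitivity is needed---a non-primitive integer point of an extreme ray is not Graver-minimal, since e.g.\ $2g = g + g$ is a sign-compatible splitting), so $C_x$ is generated by sign-compatible Graver elements, and rational Carath\'eodory followed by rounding down the coefficients leaves an integral, sign-compatible remainder $z \in C_x$ while using at most $\dim C_x$ elements so far.

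The gap is your third step, which is the entire content of the theorem and which your plan does not establish. A fresh Carath\'eodory application to $z$ has no reason to ``reuse all but one of the current directions'': a representation of $z$ may use Graver elements disjoint from $\{g_1,\dots,g_k\}$, so a priori each round can introduce up to $\dim C_x$ new elements, not one; and nothing in the sketch forces the remainder to vanish after $n-1$ rounds---the only quantity that obviously decreases is the value of a strictly positive linear functional on $C_x$, which bounds the number of rounds by the size of $x$, not by $n$. Your suggested potential (the number of fractional components) also fails, because the next representation of $z$ can again be entirely fractional. The published proofs close exactly this hole with additional structure that naive iteration lacks: roughly, an extremal choice of the linearly independent subset of (sign-compatible Graver) generators used to represent $x$ before the remainder is analysed, respectively an induction over the dimension and faces of the cone in Seb\H{o}'s later sharpening of the bound to $2\dim - 2$. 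Some such mechanism---a well-chosen representation or a face induction with a provable accounting of how many new elements appear---is the missing heart of the argument, so as written the proposal proves only the easy reduction, not the $2n-1$ bound. (A side remark: as literally stated, $n$ is the number of rows of $A$, while both your estimate $\dim C_x \le n$ and the paper's own later application with $2nt$ summands only make sense when the bound is read in terms of the dimension of the space containing $x$.)
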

Many results for \nfold{} ILPs rely on the fact that the $\ell_1$-norm of Graver basis elements
for \nfold{} matrices are small.
The best bound known for the $\ell_1$-norm is due to~\cite{klein}.
\begin{theorem} [\cite{klein}] \label{gravernorm}
The $\ell_1$-norm of the Graver basis elements of an \nfold{} matrix $\mathcal A$ is bounded by $\cO(rs\Delta)^{rs}$.
\end{theorem}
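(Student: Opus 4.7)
The plan is to reduce the bound on \nfold{} Graver basis elements to two ingredients: the classical bound for Graver basis elements of a single $s\times t$ block (which depends only on $s$ and $\Delta$), combined with a Steinitz-type pigeonhole argument across the $n$ bricks.

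Let $g=(g^{(1)},\ldots,g^{(n)})\in\mathrm{kern}(\mathcal A)$ be a Graver basis element. Since $B_i g^{(i)}=0$ for every $i$, I would first apply Theorem~\ref{Graver} inside each brick and write $g^{(i)}=\sum_{j=1}^{k_i} h_{i,j}$, where every $h_{i,j}\in\mathrm{kern}(B_i)$ is a Graver basis element of $B_i$ that is sign-compatible with $g^{(i)}$. A classical single-matrix bound gives $\|h_{i,j}\|_1\le M_1$ with $M_1=(s\Delta)^{\cO(s)}$, and consequently $\|A_i h_{i,j}\|_\infty \le \Delta M_1=: M_2$. Lift each $h_{i,j}$ to $\tilde h_{i,j}\in\mathbb Z^{nt}$ by zero-padding; then $\mathcal A\tilde h_{i,j}$ is supported only on the top $r$ coordinates, where it equals $v_{i,j}:=A_i h_{i,j}$, and $\sum_{i,j}v_{i,j}=0$ because $\mathcal A g=0$. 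The task thus reduces to bounding the total number $N=\sum_i k_i$ of pieces, since $\|g\|_1\le N\cdot M_1$.

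The key step is a Steinitz reordering: order the $N$ vectors $v_{i,j}\in\mathbb Z^r$ so that every prefix sum has $\ell_\infty$-norm at most $r\cdot M_2$. If $N$ exceeds the total number of such prefix sums, roughly $(2rM_2+1)^r$, then pigeonhole produces two equal prefix sums and hence a proper non-empty contiguous block $S$ with $\sum_{(i,j)\in S}v_{i,j}=0$. Setting $y:=\sum_{(i,j)\in S}\tilde h_{i,j}$ yields $\mathcal A y=0$, and both $y$ and $g-y$ are sign-compatible with $g$ because within each brick the $h_{i,j}$ were already sign-compatible with $g^{(i)}$. This expresses $g$ as the sum of two non-zero sign-compatible kernel elements, contradicting its minimality. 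Therefore $N\le(2rM_2+1)^r$, and substituting $M_1$ and $M_2$ yields the claimed bound $\|g\|_1\le\cO(rs\Delta)^{rs}$.

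The main obstacle I anticipate is the bookkeeping around the Steinitz step: verifying non-emptiness and strict properness of $S$ so that the minimality contradiction is genuine, confirming that sign-compatibility really survives aggregation across bricks, and, most delicately, tracking constants tightly enough in $M_1$ and $M_2$ so that the final exponent lands at $rs$ instead of $r(s+1)$ or worse. A secondary subtlety is picking the sharpest single-block bound available for $M_1$, since a looser version would propagate into the exponent of the final product.
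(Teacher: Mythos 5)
This theorem is not proved in the paper at all: it is imported verbatim from the cited reference \cite{klein}, so there is no in-paper proof to compare against. Your argument is, in essence, a correct reconstruction of the proof given in that reference: decompose each brick $g^{(i)}\in\mathrm{kern}(B_i)$ into sign-compatible Graver elements of the small block $B_i$ (norm $(2s\Delta+1)^s$), view the images $A_i h_{i,j}\in\mathbb Z^r$ as vectors summing to zero, reorder via the Steinitz lemma so prefix sums stay in a box of side $\cO(r\Delta(2s\Delta+1)^s)$, and use pigeonhole on coinciding prefix sums to extract a proper non-empty sign-compatible kernel element, contradicting minimality of $g$; sign-compatibility indeed survives aggregation since all pieces are sign-compatible with $g$ brick-wise, and non-emptiness/properness only needs the small bookkeeping you mention (restrict the pigeonhole to the first $N-1$ prefix sums). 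The only caveat is the one you flag yourself: the naive accounting gives $\lVert g\rVert_1 \le (2s\Delta+1)^s\bigl(2r\Delta(2s\Delta+1)^s+1\bigr)^r$, i.e.\ an exponent of order $rs+r+s$ rather than literally $rs$, which is how the stated bound $\cO(rs\Delta)^{rs}$ should be read (and how \cite{klein} states it after its own constant-tracking); this does not affect any use of the theorem in the present paper, where only $(rs\Delta)^{\cO(rs)}$ is needed.
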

Next, we will introduce a technique called splitters (see e.g.~\cite{DBLP:conf/focs/NaorSS95}), which has its origins in the FPT community and was used
to derandomize the Color Coding technique~\cite{alon1995color}.
So far it has not been used with \nfold{} ILPs. We refer the reader to the outline of techniques in the introduction
for the idea on how we apply the splitters.
\begin{definition}
An $(n, k, \ell)$ splitter is a family of hash functions $F$ from $\{1,\dotsc, n\}$ to $\{1,\dotsc,\ell\}$ such that for every $S \subseteq \{1,\dotsc,n\}$ with $|S| = k$, there exists a function
$f \in F$ that splits $S$ evenly,
that is, for every $j, j'\le \ell$ we have $|f^{-1}(j) \cap S|$ and $|f^{-1}(j') \cap S|$ differ by at most 1. 
\end{definition}
If $\ell \ge k$, the above means that there is some hash function that has no collisions when
restricted to $S$.
Interestingly, there exist splitters of very small size.
\begin{theorem}[\cite{alon1995color}] \label{splitComp}
There exists an $(n, k, k^2)$ splitter of size $k^{\cO(1)} \log(n)$ which is computable in time $k^{\cO(1)} \cdot n \log(n)$.
\end{theorem}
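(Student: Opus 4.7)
The plan is to combine a probabilistic existence argument with a standard two-level derandomization.

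For existence, consider a uniformly random function $f:\{1,\ldots,n\}\to\{1,\ldots,k^2\}$. For any fixed $S$ of size $k$, the probability that $f$ is injective on $S$ is
\[
\prod_{i=0}^{k-1}\frac{k^2-i}{k^2}\ \geq\ \prod_{i=0}^{k-1}\Bigl(1-\frac{i}{k^2}\Bigr)\ \geq\ c
\]
for some absolute constant $c>0$ (the exponent sums to at most $1/2$). Since $\ell=k^2\geq k$, splitting $S$ evenly is equivalent to injectivity on $S$, so a random $f$ is good for a fixed $S$ with constant probability. Taking $m$ independent functions, the probability that a fixed $S$ is missed is at most $(1-c)^m$, and union-bounding over the $\binom{n}{k}\leq n^k$ choices of $S$ shows that $m=\Theta(k\log n)$ suffices to guarantee success. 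This already gives existence of a splitter of size $k^{\cO(1)}\log n$.

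To make it constructive within the target time budget, I would decompose the construction into two levels. The first level reduces the universe: build a small perfect hash family $\mathcal F_1$ of functions $[n]\to[k^{\cO(1)}]$ via low-degree polynomial hashing over a finite field of size $k^{\cO(1)}$. Each polynomial of degree $d=\cO(\log n/\log k)$ yields a candidate hash function, and it is standard that $k^{\cO(1)}\log n$ such polynomials form a perfect hash family from $[n]$ to $[k^{\cO(1)}]$; each polynomial can be evaluated on all of $[n]$ in time $\cO(n)$, for a total of $k^{\cO(1)} n \log n$. The second level then produces a splitter $\mathcal F_2$ from $[k^{\cO(1)}]$ to $[k^2]$. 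Because the universe is now only of size $k^{\cO(1)}$, brute-force enumeration together with the method of conditional probabilities (or simply repeated sampling derandomized over a polynomial-size sample space) produces $\mathcal F_2$ of size $k^{\cO(1)}$ in time $k^{\cO(1)}$. The composed family $\{g\circ h : h\in\mathcal F_1,\ g\in\mathcal F_2\}$ has size $k^{\cO(1)}\log n$ and is an $(n,k,k^2)$-splitter by construction: given $S\subseteq[n]$ of size $k$, pick $h\in\mathcal F_1$ injective on $S$, then pick $g\in\mathcal F_2$ that splits $h(S)$ evenly; their composition splits $S$ evenly.

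The main obstacle will be producing the first-level perfect hash family with the right parameters simultaneously — small size $k^{\cO(1)}\log n$ (not blowing up to $2^{\cO(k)}\log n$) and construction time near-linear in $n$. The naive probabilistic argument gives existence but not efficient construction, and generic derandomization via conditional expectations applied directly would incur extra $\text{poly}(n)$ factors. Overcoming this requires invoking an explicit algebraic construction, such as polynomial-based hashing of the right degree or an off-the-shelf almost-$k$-wise independent sample space of near-optimal size, which is exactly where I would appeal to the Naor–Schulman–Srinivasan-type machinery referenced in the paper; the remaining steps are then routine compositions and union bounds.
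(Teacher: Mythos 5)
First, note that the paper does not prove this statement at all: it is imported verbatim from Alon--Yuval--Zwick \cite{alon1995color} (with the Naor--Schulman--Srinivasan splitter machinery behind it), and the paper even offers the FKS-based family as an easy-to-implement substitute at the cost of an extra $\log(n)$ factor. So your proposal has to stand on its own, and it does not quite: the existence argument and the overall two-level plan are sound, but the first level, which is the entire difficulty, is broken as sketched. If you hash by evaluating degree-$d$ polynomials, $d=\cO(\log(n)/\log(k))$, over a field of size $q=k^{\cO(1)}$, then for large $n$ you have $d>q$, and two distinct polynomials of degree below $d$ can agree on \emph{every} point of the field (they may differ by a multiple of $x^q-x$); such a pair collides under every evaluation map, so the family is not a perfect hash family of any size, let alone $k^{\cO(1)}\log(n)$. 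To make polynomial evaluation injective on every $k$-set you need $q\gtrsim k^2 d=\Theta(k^2\log(n)/\log(k))$, i.e., an intermediate range that grows with $n$; collapsing that range down to $k^{\cO(1)}$ without multiplying the family size by another $\log(n)$ is precisely the non-elementary step (this is what Schmidt--Siegel and Naor--Schulman--Srinivasan accomplish, and what the FKS family in the paper deliberately gives up, landing at $\cO(k^4\log^2(n))$). Your closing remark that you would ``appeal to NSS-type machinery'' for exactly this step makes the argument circular: that machinery essentially \emph{is} the theorem being claimed.

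Two smaller points. For the second level, ``brute-force enumeration together with the method of conditional probabilities'' in time $k^{\cO(1)}$ is not justified, since there are $\binom{k^{\cO(1)}}{k}=k^{\Theta(k)}$ subsets to condition on; the clean fix is to take an entire pairwise-independent family from the $k^{\cO(1)}$-size universe to $[k^2]$ (size and construction time $k^{\cO(1)}$) and observe that for any fixed $k$-set the expected number of colliding pairs under a random member is $\binom{k}{2}/k^2<1$, so some member is injective on it --- no search or verification needed. Also, your union-bound count $m=\Theta(k\log n)$ for plain existence is fine, but existence was never the issue; the content of the theorem is the explicit construction within size $k^{\cO(1)}\log(n)$ and time $k^{\cO(1)}\cdot n\log(n)$, and that part remains unproved in your write-up. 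If you only need what your elementary tools deliver, you end up with the FKS-style bound $k^{\cO(1)}\log^2(n)$ already stated in the paper, which suffices for the paper's application up to an extra logarithmic factor but is strictly weaker than Theorem~\ref{splitComp}.
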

We note that an alternative approach to the result above is to use FKS hashing.
Although it has an extra factor of $\log(n)$, it is particularly easy to implement.
\begin{theorem}[Corollary 2 and Lemma 2 in~\cite{DBLP:journals/jacm/FredmanKS84}]
  Define for every prim $q < k^2 \log(n)$ and prim $p < q$ the hash function
  $x \mapsto 1 + (p \cdot (x \text{ mod } q) \text{ mod } k^2)$.
  This is an $(n, k, k^2)$ splitter of size $\cO(k^4\log^2(n))$.
\end{theorem}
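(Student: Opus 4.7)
The plan is to follow the classical two-level hashing construction of Fredman, Komlós, and Szemerédi. Given an arbitrary $S \subseteq \{1, \ldots, n\}$ with $|S| = k$, I would first exhibit a prime $q$ for which the first-level map $x \mapsto x \bmod q$ is injective on $S$, and then a prime $p < q$ for which the second-level map $y \mapsto (p y) \bmod k^2$ is injective on the image of $S$ under the first map. The composition lands in $\{1, \ldots, k^2\}$ after the $+1$ shift, and injectivity into exactly $k^2 \geq k$ buckets automatically satisfies the splitter condition that any two bucket sizes differ by at most one.

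For the first level, I would invoke the standard number-theoretic argument: the product $P = \prod_{x \neq x' \in S} |x - x'|$ is bounded by $n^{k^2}$ and hence has fewer than $k^2 \log_2 n$ distinct prime divisors, so by the prime number theorem there is a prime $q = \cO(k^2 \log n)$ that divides none of the pairwise differences, and for any such $q$ the reduction $x \bmod q$ separates all elements of $S$. For the second level, I would apply the FKS collision count on $T = \{x \bmod q : x \in S\} \subseteq \{0, \ldots, q-1\}$, which has size $k$: for each pair $y \neq y' \in T$ there are at most $\lfloor (q-1)/k^2 \rfloor$ choices of $p \in \{1, \ldots, q-1\}$ making $p(y - y') \bmod q$ a nonzero multiple of $k^2$. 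Summing over the $\binom{k}{2}$ pairs leaves a good set of size at least $q/2$, from which one must extract a prime $p$.

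The family size bound is then immediate: the number of primes below $k^2 \log n$ is $\cO(k^2 \log n / \log(k^2 \log n))$, and for each such $q$ there are at most $q - 1 = \cO(k^2 \log n)$ choices of $p$, so the total is $\cO(k^4 \log^2 n)$. The main obstacle is the restriction that $p$ itself be prime: the FKS collision count naturally produces a good integer multiplier, but to guarantee a good \emph{prime} multiplier one must compare the density of the good set against the density $\Theta(1/\log q)$ of primes in $[1, q-1]$. A clean resolution is to inflate the bucket size by a constant factor (absorbed into the $\cO$-notation), which shrinks the bad set far enough that the prime number theorem forces a prime $p$ to survive inside the good set.
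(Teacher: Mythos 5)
The paper does not actually prove this statement---it imports it verbatim from Fredman--Koml\'os--Szemer\'edi---so the relevant comparison is with the FKS argument you are reconstructing, and your two-level plan is indeed the intended route. However, two of your quantitative steps fail as written. At the first level, bounding the number of distinct prime divisors of $\prod_{x\neq x'\in S}|x-x'|\le n^{\binom{k}{2}}$ by $k^2\log_2 n$ and comparing against $\pi\bigl(\cO(k^2\log n)\bigr)=\Theta\bigl(k^2\log n/\log(k^2\log n)\bigr)$ does not yield a good prime $q$: the divisor bound is asymptotically \emph{larger} than the number of available primes. The standard repair is the primorial (Chebyshev) argument: if every prime up to $T$ divided some pairwise difference, then $e^{\Omega(T)}\le\prod_{p\le T}p\le n^{k^2/2}$, which forces $T=\cO(k^2\log n)$ and hence a collision-free $q$ in that range. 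At the second level, for a pair $y\neq y'$ the bad multipliers are those with $p(y-y')\bmod q$ equal to a multiple of $k^2$ \emph{or} to $q$ minus such a multiple, i.e.\ roughly $2(q-1)/k^2$ values, so summing over $\binom{k}{2}$ pairs only guarantees that a $1/k$ fraction of the multipliers is good---your claimed good set of size $q/2$ rests on the undercount and is false for range exactly $k^2$; FKS Corollary~2 gives existence of one good multiplier, not constant density.

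This breaks your handling of the real obstacle, the primality of $p$. You cannot ``inflate the bucket size by a constant factor absorbed into the $\cO$'': the number of buckets $k^2$ is the range of the hash functions and is part of the definition of an $(n,k,k^2)$ splitter---only the family size carries an $\cO$. And even if the range were enlarged to $ck^2$, the bad set still has size about $q/c$, which swamps the $\Theta(q/\log q)$ primes below $q$ for every constant $c$; a density argument would need $c=\Omega(\log q)$, so ``the prime number theorem forces a prime to survive'' does not follow. Note also that in FKS the second-level multiplier is \emph{not} required to be prime, and the stated family size $\cO(k^4\log^2 n)$ is already what you get by letting $p$ range over all of $\{1,\dots,q-1\}$; moreover, read literally, $\bigl(p\cdot(x\bmod q)\bigr)\bmod k^2$ without a reduction modulo $q$ has collisions independent of $p$ whenever $\gcd(p,k^2)=1$, so the formula must be read FKS-style (as you implicitly did). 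The clean way to close the proof is therefore to establish the splitter property with arbitrary multipliers $p<q$ (same size bound); as proposed, the step that extracts a good \emph{prime} $p$ is a genuine gap.
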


\section{Efficient Computation of Improving Directions}
The backbone of our algorithm is the efficient computation of augmenting steps.
The important aspect is the fact that
we can update the augmenting steps very efficiently if the input changes only slightly.
In other words, whenever we change the current solution by applying an augmenting step,
we do not have to recompute the next augmenting step from scratch.
The augmenting steps depend on a partition of the bricks. In the following we define
the notion of a best step based on a fixed partition. Later, we will independently
find steps for a number of partitions and take the best among them.
\label{sec-aug}
\begin{definition}
  Let $P$ be a partition of the $n$ bricks into $k^2$ disjoint sets $P_1, P_2, \dots, P_{k^2}$.
  Let $\overline u\in \mathbb Z^{nt}_{\ge 0}$ and $\overline\ell\in \mathbb Z^{nt}_{\le 0}$ be some 
  upper and lower bounds on the variables (not necessarily the same as in the \nfold{}).
  A $(P, k)$-best step is an optimal solution of
  \begin{align*}
    \max \ & c^T x \\
    \mathcal A x &= 0 \\
    \sum_{i\in B_j} |x_i| &\le k &  \forall j \in \{1, \dots, k^2\} \\
    x_i &= 0 &  \forall j \in \{1, \dots, k^2\}, B_{j'}\in P_j\setminus \{B_j\}, i\in B_{j'} \\
    \overline\ell \le x &\le \overline u \\
    x &\in \mathbb Z^{nt} \\
    B_j &\in P_j & \forall j\in\{1,\dotsc,k^2\} \\
  \end{align*}
\end{definition}
This means a $(P, k)$-best step is an element of kern($\mathcal A$), which uses only one brick
of every $P_j\in P$. Within that brick the norm of the solution must be at most $k$.

\begin{theorem} \label{splitters1}
Consider the problem of finding a $(P, k)$-best step in an \nfold{} where the lower and upper bounds $\overline u, \overline\ell$ can change. This problem can be solved initially in time $k^{\cO(r)} \cdot \Delta^{\cO(r^2 + s^2)} \cdot nt$ and then in $k^{\cO(r)} \cdot \Delta^{\cO(r^2 + s^2)} \cdot \log(nt)$ update time whenever the bounds of a single variable change.
\end{theorem}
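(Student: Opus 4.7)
My plan is to nest three balanced binary trees so that the root stores the value of a $(P, k)$-best step and every single-variable bound change propagates along one root-to-leaf path. The base object, for each brick $i$, is the table
\[ g_i[v] \;=\; \max\bigl\{c^{(i)\,T} y \,:\, A_i y = v,\; B_i y = 0,\; \overline\ell^{(i)} \le y \le \overline u^{(i)},\; \|y\|_1 \le k,\; y \in \mathbb Z^t\bigr\}, \]
indexed over $v \in \{-k\Delta, \dots, k\Delta\}^r$; here $c^{(i)}$, $\overline\ell^{(i)}$, $\overline u^{(i)}$ are the restrictions of the cost and bounds to brick $i$. This table captures everything about brick $i$ that the rest of the algorithm needs, and crucially its domain is independent of $t$. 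Because a $(P, k)$-best step uses exactly one brick from each color class, at color level I only need $h_j[v] = \max_{i \in P_j} g_i[v]$, and across colors I need the max-plus convolution $\max\{\sum_j h_j[v_j] : \sum_j v_j = 0\}$, whose value at $0$ is the target objective and whose witness can be recovered by standard traceback through the tree.

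For fast updates I turn all three layers into one composite tree of depth $O(\log(nt))$. Within a brick I use a segment tree over its $t$ variables; each internal node stores the DP table of its variable range indexed by $(A_i y,\, B_i y,\, \|y\|_1)$, whose state space has size $(k\Delta)^{O(r+s)}(k+1)$, and two children are combined by max-plus convolution over that space. Within a color class $P_j$ I use a balanced tree over bricks with internal nodes storing the pointwise max of their children's $g$-tables. Across the $k^2$ colors I use a balanced tree whose internal nodes store the max-plus convolution of their children's tables. A bound change is reflected by updating the affected leaf and re-merging along the $O(\log(nt))$ ancestors, one merge per level.

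For the running time, each leaf contributes $O(1)$ work and each internal node a single merge, so initialization costs $k^{O(r)}\Delta^{O(r^2+s^2)} \cdot nt$. Each update performs $O(\log(nt))$ merges. The per-merge cost is dominated by the per-brick convolution (state space $(k\Delta)^{O(r+s)}$, convolution cost $(k\Delta)^{O(r+s)}$ after truncating to the reachable subspace) and by the cross-color convolutions, where at level $\ell$ the accumulated domain is morally bounded by $(2^\ell k \Delta)^{O(r)}$. A careful accounting of these costs yields the claimed $k^{O(r)}\Delta^{O(r^2+s^2)} \log(nt)$ update bound.

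The main obstacle I foresee is bounding the intermediate table sizes in the cross-color convolution: naively the running convolution of $k^2$ functions can have partial sums of coordinate norm $k^3\Delta$, which would blow the update cost up. To keep the domains under control I plan to prune any $v$ at an internal node whose complement cannot still be cancelled to $0$ by the remaining colors, using the $A$-contribution bound $k\Delta$ per brick together with the global constraint $\sum_j v_j = 0$. Combined with a Steinitz-flavored reordering argument (or with Theorem~\ref{gravernorm} applied to the submatrix of columns actually used), this should cap every intermediate table at size $(k\Delta)^{O(r)}$. Once this is in place, everything else is standard bookkeeping on balanced trees: propagating the single affected leaf upward, supporting traceback to recover an actual optimal $x$, and checking that the combined tables still encode all feasible $(P,k)$-steps and no infeasible ones.
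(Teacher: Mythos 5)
Your high-level architecture is essentially the paper's: per-brick tables $g_i[v]$ over right-hand sides $v\in\{-k\Delta,\dots,k\Delta\}^r$, a per-color maximum over bricks, and a combination over the $k^2$ colors enforcing total sum $0$ (the paper realizes this last layer as a longest-path computation in a layered graph, which is the same max-plus convolution), with balanced search trees to absorb single-variable updates. The genuine gap is in the brick level. Your segment tree over the $t$ variables of a brick must track the partial value of $B_i y$ at internal nodes (it need not be $0$ there), so each node's table has size $(2k\Delta+1)^{r+s}(k+1)$ and each merge costs $k^{\cO(r+s)}\Delta^{\cO(r+s)}$; both your initialization and your update time therefore carry a factor $k^{\cO(s)}$ that the theorem statement does not allow. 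Since the theorem is applied with $k=\cO(rs\Delta)^{rs}$, this factor is $(rs\Delta)^{\cO(rs^2)}$ and degrades the overall bound to $(rs\Delta)^{\cO(r^2s+rs^2)}$, i.e., exactly the prior-art exponent the paper improves upon. The paper avoids any dynamic program across a brick's variables: when a brick's bounds change, it re-solves that brick's tiny IP ($r+s$ constraints, $\lVert y\rVert_1\le k$) from scratch with the Eisenbrand--Weismantel algorithm, after noting that there are only $\Delta^{\cO(r+s)}$ distinct column types and that, by a pigeonhole/exchange argument, only the $k$ best variables with $u'_i>0$ and the $k$ best with $\ell'_i<0$ of each type can matter; AVL trees keyed by column type yield these candidates in $\cO(k\log t)$ time and absorb a bound change in $\cO(\log t)$. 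This makes the per-brick cost $k\log(t)\cdot\Delta^{\cO(r^2+s^2)}$ -- only \emph{linear} in $k$ -- so that the sole $k^{\cO(r)}$ factor comes from the $(k\Delta)^{\cO(r)}$ many edges/right-hand sides, which is what the claimed bound requires.

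Conversely, the step you flag as the main obstacle is not one. In the cross-color combination the partial sums lie in $\{-k^3\Delta,\dots,k^3\Delta\}^r$, so every intermediate table already has size $(2k^3\Delta+1)^r=(k\Delta)^{\cO(r)}$ with no pruning at all; the paper's layered graph uses precisely this bound, and no Steinitz-type reordering or appeal to Theorem~\ref{gravernorm} is needed (nor is it clear the latter would apply to the truncated, box-constrained brick solutions). So the fix is not in the cross-color layer: keep your color-level and cross-color trees (they are equivalent to the paper's layered graph together with one AVL tree per edge storing the bricks' values), and replace the within-brick segment tree by the column-type reduction plus a fresh small-IP solve for each affected right-hand side after an update. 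With that substitution, the claimed $k^{\cO(r)}\Delta^{\cO(r^2+s^2)}\cdot nt$ initialization and $k^{\cO(r)}\Delta^{\cO(r^2+s^2)}\cdot\log(nt)$ update times follow; witness recovery by traceback works the same way in both formulations.
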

\begin{proof}
Let $P$ be a partition of the bricks from matrix $\mathcal A$ into $k^2$ disjoint sets $P_1, P_2, \dots, P_{k^2}$.
Solving the $(P, k)$-best step problem requires that from each set $P_j\in P$ we choose at most one brick and set this brick's variables. All variables in other bricks of $P_j$ must be $0$.

Let $x$ be a $(P, k)$-best step
and let $x^{(j)}$ have the values of $x$ in variables of $P_j$ and $0$
in all other variables. Then by definition, $\lVert x^{(j)} \rVert_1 \le k$.
This implies that the right-hand side regarding $x^{(j)}$, that is to
say, $\mathcal A x^{(j)}$, is also small. Since the absolute value of an entry
in $\mathcal A$ is at most $\Delta$, we have that $\lVert\mathcal A x^{(j)}\rVert_\infty \le k\Delta$. Let $a_i$ be the $i$-th row of $\mathcal A$. If $i > r$, then $a_i x^{(j)} = 0$.
This is because $\mathcal A x = 0$ and $a_i$ has all its support either completely
inside $P_j$ or completely outside $P_j$.
Meaning, the value of $\mathcal A x^{(j)}$ is one of the $(2 k \Delta + 1)^r$ many values we get
by enumerating all possibilities for the first $r$ rows.
Furthermore, since $P$ has only $k^2$ sets,
the partial sum $\mathcal A (x^{(1)} + \cdots + x^{(j)})$ is always one of $(2 k^3 \Delta + 1)^r = (k\Delta)^{\cO(r)}$ many candidates.

Hence to find a $(P, k)$-best step we can restrict our search to solutions whose partial sums stay in this range.
To do so, we set up a graph containing $k^2 + 2$ layers $L_0, L_1, \dots L_{k^2}, L_{k^2+1}$. An example is given in figure \ref{fig:layerGraph}. The first layer $L_0$ will consist of just one node marking the starting point with partial sum zero. Similarly, the last layer $L_{k^2+1}$ will just contain the target point also having partial sum zero, since a $(P, k)$-best step is an element of kern($\mathcal A$).
Each layer $L_j$ with $1 \leq j \leq k^2$ will contain $(2 k^3 \Delta + 1)^{r}$ many nodes, each representing one possible value of $\mathcal A (x^{(1)} + \cdots x^{(j)})$. Two points $v, w$ from adjacent layers $L_{j-1}$, $L_j$ will be connected if the difference of the corresponding partial sums, namely $w-v$, can be obtained by a solution $y$ of variables from only one brick of $P_j$ (with $\lVert y \rVert_1\le k$). The weight of the edge will be the largest gain for the objective function $c^T y$ over all possible bricks. Hence, it could be necessary to compute and compare up to $n$ values for each $P_j$ and each difference in the partial sums to insert one edge into the graph.
Finally, we just have to find the longest path in this graph as it corresponds to a $(P, k)$-best step.
The out-degree of each node is bounded by $(2 k^3 \Delta + 1)^{r}$ since at most this many nodes are reachable in the next layer. Therefore the overall number of edges is bounded by
\begin{align*}
(k^2+2) \cdot (2 k^3 \Delta + 1)^{r} \cdot(2 k^3 \Delta + 1)^{r} = (k \Delta)^{\cO(r)}. 
\end{align*}
Using the Bellman-Ford algorithm we can solve the Longest Path problem for a graph with $N$ vertices and $M$ edges in time $N \cdot M$ as the graph does not contain any circles.
This gives a running time of $(k \Delta)^{\cO(r)} \cdot (k \Delta)^{\cO(r)} = (k \Delta)^{\cO(r)}$
for solving the problem.
Constructing the graph, however, requires solving a number of IPs of the form
\begin{align*}
  \max\ & c^{\prime T} x \\
  \begin{pmatrix} 
    A_j \\
    B_j
  \end{pmatrix} 
  x &= \begin{pmatrix}
  b' \\
  0
  \end{pmatrix} \\
   \lVert x \rVert_1 &\le k \\
   \ell'\le x &\le u' \\
   x &\in\mathbb Z^t ,
\end{align*}
where $b'\in\mathbb Z^r$ is the corresponding right-hand side of the top rows and $\ell', u', c'$ are the upper and lower bounds, and the objective of the block.
This is an IP with $r + s$ constraints, $t$ variables, lower
and upper bounds, and entries of the matrix bounded by $\Delta$ in absolute value.
Using the algorithm by Eisenbrand and Weismantel~\cite{eisenbrand2018proximity},
solving one of them requires time
\begin{equation*}
  t \cdot \cO(r+s+1)^{r + s + 4}\cdot \cO(\Delta)^{(r + s + 1)(r + s + 4)} \cdot \log^2((r+s+1)\Delta) = t \cdot \Delta^{\cO(r^2 + s^2)} .
\end{equation*}
In fact, a little thought allows us to reduce the dependency on $t$ to a logarithmic one:
Since the number of constraints in the ILP above is very small, there are only $\Delta^{\cO(r + s)}$
many different columns. Because of the cardinality constraint $\lVert x \rVert_1 \le k$,
we only have to consider $2k$ many variables of each type of column, namely:
\begin{itemize}
  \item The $k$ many with $u'_i > 0$ and maximal $c'_i$ and 
  \item the $k$ many with $\ell'_i < 0$ and minimal $c'_i$.
\end{itemize}
If some solution uses a variable not in this set, then by pigeonhole principle there
is a variable with the same column values and a superior objective value and
which can be increased/decreased. We can reduce the variable
outside this set and increase the corresponding variable inside this set until all
variables outside the set are $0$.
We can use an appropriate data structure (e.g. AVL trees) to maintain a set of all variables
with $u'_i > 0$ ($\ell'_i < 0$) such that we can find the $k$ best among them in time
$\cO(k\log(t))$. Whenever the bounds of some variable change, we might have to add or remove
entries, which also takes only logarithmic time.
After initialization in time $\cO(nt)$ (in total for all bricks) solving such an IP can therefore be implemented in time
\begin{equation*}
  k\log(t) + 2k\Delta^{\cO(r + s)} \Delta^{\cO(r^2 + s^2)} \le k\log(t) \Delta^{\cO(r^2 + s^2)}.
\end{equation*}
The number of IPs to solve is at most $n$ times the number of edges,
since we have to compare the values of up to $n$ bricks.
This gives a running time of
\begin{equation*}
  \cO(nt) + n \cdot (k\Delta)^{\cO(r)} \cdot \log(t) \cdot \Delta^{\cO(r^2+s^2)} \le nt\cdot k^{\cO(r)} \cdot \Delta^{\cO(r^2+s^2)}
\end{equation*}
for constructing the graph.
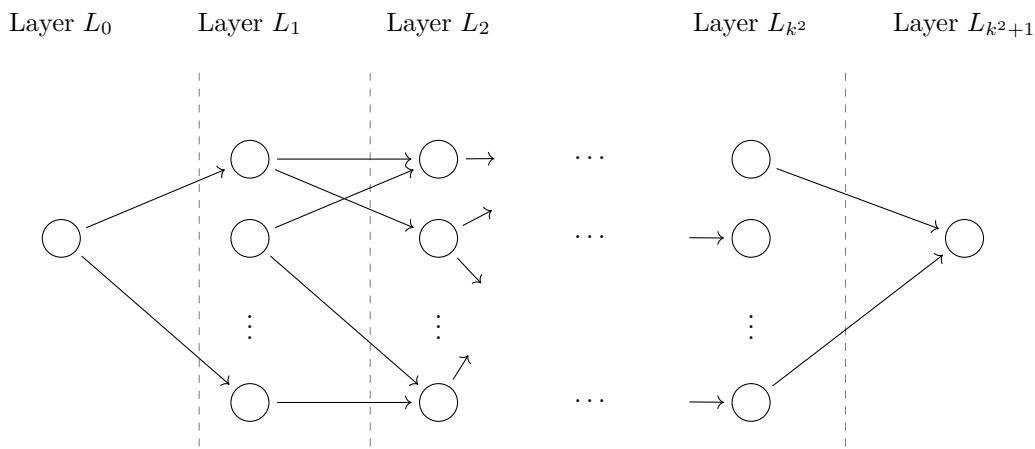
\begin{figure}
\centering
\begin{tikzpicture}
    \matrix (G)[matrix of nodes, column sep=8mm, row sep=3mm,nodes={draw,circle, minimum size=0.5cm}] {
    \node[draw=none]{Layer $L_0$}; & \node[draw=none]{Layer $L_1$}; & \node[draw=none]{Layer $L_2$}; &  & \node[draw=none]{Layer $L_{k^2}$}; & \node[draw=none]{Layer $L_{k^2+1}$}; \\
   			   		& \node (v2) {};						& \node (v5) {};	& \node[draw=none] (d4){$\dots$};	& \node (v8) {};		& \\
      \node (v1) {};	& \node (v3) {};						& \node (v6) {};	&\node[draw=none] (d5){$\dots$};	& \node (v9) {};		& \node (v12) {};\\
                     		& \node[draw=none] (d1){$\vdots$};	& \node[draw=none] (d2){$\vdots$}; &	 &\node[draw=none] (d3){$\vdots$};	& \\
				 	& \node (v4) {};						& \node (v7) {};	& \node[draw=none] (d6){$\dots$};	& \node (v10) {};		& \\     						   
    };
    
     \draw[dashed, gray] (-4.25, 1.5) -- (-4.25,-3.5);
      \draw[dashed, gray] (-2, 1.5) -- (-2,-3.5);
       \draw[dashed, gray] (4.25, 1.5) -- (4.25,-3.5);
    
    \graph [use existing nodes] {
      v1 ->[shorten <= 0.1cm, shorten >= 0.1cm] {v2, v4};
      v2 ->[shorten <= 0.1cm, shorten >= 0.1cm] {v5, v6};
      v3 ->[shorten <= 0.1cm, shorten >= 0.1cm] {v5, v7};
      v4 ->[shorten <= 0.1cm, shorten >= 0.1cm] {v7};
      v8 ->[shorten <= 0.1cm, shorten >= 0.1cm] {v12};
      v10 ->[shorten <= 0.1cm, shorten >= 0.1cm] {v12};
      v5 ->[shorten <= 0.1cm, shorten >=0.9cm] {d4};
      v6 ->[shorten <= 0.1cm, shorten >=1.1cm] {d4};
      v6 ->[shorten <= 0.1cm, shorten >=1.75cm] {d6};
      v7 ->[shorten <= 0.1cm, shorten >=2.7cm] {d4};
      d5 ->[shorten <=0.9cm, shorten >= 0.1cm] {v9};
      d6 ->[shorten <=0.9cm, shorten >= 0.1cm] {v10};
    };
   
\end{tikzpicture}
\caption{This figure shows an example for a layered graph obtained while solving the $(P,k)$-best step problem. There are $k+2$ layers, visually separated by gray dashed lines. This includes one source layer $L_0$, one target layer $L_{k^2+1}$ both with just a single node representing the zero sum. Further there are $k^2$ layers with $(2 k^3 \Delta + 1)^{r}$ nodes each, where in one layer the nodes stand for all reachable partial sums. Two points $v, w$ from adjacent layers $L_{j-1}$, $L_j$ will be connected if the difference of the corresponding partial sums, namely $w-v$, can be obtained by a solution $y$ of variables from only one brick of $P_j$ (with $\lVert y \rVert_1\le k$). The weight of the edge will be the largest gain for the objective function $c^T y$ over all possible bricks. For the sake of clarity both the values of the nodes and the edges are not illustrated.} \label{fig:layerGraph}
\end{figure}
To obtain the update time from the premise of the theorem, it is perfectly fine to solve
the Longest Path problem again, but we cannot construct the graph from scratch.
However, in order to construct the graph we still have to find the best value over all bricks for each edge.
Fortunately, if only a few bricks are updated (in their lower and upper bounds) it is not necessary to recompute all values.
Each edge corresponds to a particular $P_j\in P$ and a fixed right-hand side (a possible value of $\mathcal A x^{(j)}$).
We require an appropriate data structure $\mathcal D_e$ for every edge $e$, which supports fast computation of the operations \textsc{FindMax}, \textsc{Insert}, and \textsc{Delete}. Again, an AVL tree computes each of these operations in time $\cO(\log(N))$, where $N$ is the number of elements.
In $\mathcal D_e$ we store pairs $(v, i)$ where $i$ is a brick in $P_j$ and $v$ is the maximum gain 
of brick $i$ for the right-hand side of $e$. The pairs are stored in lexicographical order.
Since there are at most $n$ bricks in $P_j$, the data structure will have at most $n$ elements.
Initially, we can build $\mathcal D_e$ in time $nt \cdot \Delta^{\cO(r^2 + s^2)}$
(this is replicated for each edge).
Now consider a change to the instance.
Recall that we are looking at changes that affect only a single brick, namely the upper and lower bounds
within that brick change. We are going to update the data structure $\mathcal D_e$ (for each edge) to reflect the changes
and we are going to recompute the edge value of each edge $e$ using $\mathcal D_e$.
Then we simply solve the Longest Path problem again.
Let $P_j\in P$ be the set that contains the brick $i$ that has changed in some variable. We only have to consider edges from $L_{j-1}$ to $L_j$, since none of the other edges are affected by the change.
For a relevant edge $e$ we compute the previous value $v$ and current value $v'$ that the brick $i$
would produce (before and after the bounds have changed). In $\mathcal D_e$ we have to remove $(v, i)$
and insert $(v', i)$. Both operations need only $O(\log(n))$ time.
Then the running time to update $\mathcal D_e$ for one edge is
\begin{equation*}
  k\log(t) \cdot \Delta^{\cO(r^2 + s^2)} + \cO(\log(n))
  \le k\log(nt) \cdot \Delta^{\cO(r^2 + s^2)} .
\end{equation*}
In order to update the edge value of $e$ using $\mathcal D_e$, we simply have to find the maximum element in
$\mathcal D_e$, which again takes time $\cO(\log(n))$.
To summarize, the total time to update the $(P, k)$-best step after a change to a single brick consists of
(1) updating each $\mathcal D_e$, (2) finding the maximum in each $\mathcal D_e$, and (3)
solving the Longest Path problem. We conclude that the update time is
\begin{multline*}
  k\log(nt) \cdot \Delta^{\cO(r^2 + s^2)} \cdot (k\Delta)^{\cO(r)}
  + \log(n) \cdot (k\Delta)^{\cO(r)}
  + (k\Delta)^{\cO(r)} \\
  \le k^{\cO(r)} \Delta^{\cO(r^2 + s^2)} \cdot \log(nt) .
  \qedhere
\end{multline*}
\end{proof}

\section{The Augmenting Step Algorithm}
\label{sec-alg}
In this section we will assume that all lower and upper bounds are finite and give a complete algorithm
for this case. Later, we will explain how to cope with infinite bounds.
We start by showing how to converge to an optimal solution when an initial feasible solution is given.
To compute the initial solution, we also apply this algorithm on a slightly modified instance.
The approach resembles the procedure in previous literature, although we apply the results from
the previous section to speed up the computation of augmenting steps.

Let $x$ be a feasible solution for the \nfold{}, in particular $\mathcal A x = b$. Let $x^*$ be an optimal one. Theorem~\ref{Graver} states that we can decompose the difference vector $x' = x^* - x$ into at most $2nt$ weighted Graver basis elements, that is 
\begin{align*}
x' = x^* - x = \sum\nolimits_{j=1}^{2nt-1} \lambda_j g_j .
\end{align*}
For intuition, consider the following simple approach
(this is similar to the algorithm in~\cite{hemmecke2013n}).
Suppose we are able to guess the best vector $\lambda_i g_i = \mathrm{argmax}_j\{c^T (\lambda_j g_j)\}$
regarding the gain for the objective function.
This pair of step length $\lambda_i$ and Graver element $g_i$ is called the Graver best step. Then we can augment the current solution $x$ by adding $\lambda_i g_i$ to it, i.e., we set $x \leftarrow x + \lambda_i g_i$.
Feasibility follows because all $g_j$ are sign-compatible.
This procedure is repeated until no improving step is possible and therefore $x$ must be optimal.
In each iteration this decreases the gap to the optimal solution by a factor of at least $1-1/(2nt)$
by the pigeonhole principle.
It may be costly to guess the precise Graver best step, but for our purposes it will suffice to
find an augmenting step that is approximately as good.

We will now describe how to guess $\lambda_i$.
Since $x + \lambda_i g_i$ is feasible, we have that $\lambda_i g_i \le u - x \le u - \ell$ and
$\lambda_i g_i \ge \ell - x \ge \ell - u$. Let $(g_i)_j\in \supp(g_i)$ be some non-zero variable.
If $(g_i)_j > 0$, then $\lambda_i \le (\lambda_i g_i)_j \le u_j - \ell_j$.
Otherwise, $(g_i)_j < 0$ and $\lambda_i \le - (\lambda_i g_i)_j \le -(\ell_j - u_j) = u_j - \ell_j$.
Hence, it suffices to check all values in the range $\{1,\dotsc,\Gamma\}$, where
$\Gamma = \max_j\{u_j - \ell_j\}$.
Proceeding like in \cite{klein}, we lower the time a bit further by not taking every value into consideration.
Instead, we look at guesses of the form $\lambda' = 2^k$ for $k \in \{0, \dotsc, \lfloor\log(\Gamma)\rfloor\}$.
Doing so we lose a factor of at most $2$ regarding the improvement of the objective function,
since $c^T (\lambda' g_i) > 0.5 \cdot c^T (\lambda_i g_i)$ when taking
$\lambda'=2^{\lfloor\log(\lambda_i)\rfloor} > \lambda_i / 2$.
Fix $\lambda'$ to the value above.
Next we describe how to compute an augmenting step that is at least as good as $\lambda' g_i$.
Note that $g_i$ is a solution of
\begin{align*}
  \mathcal A y &= 0 \\
  \lVert y \rVert_1 &\le k \\
  \lceil\frac{\ell-x}{\lambda'}\rceil \le y &\le \lfloor\frac{u-x}{\lambda'}\rfloor,
\end{align*}
where $k = \cO(rs\Delta)^{rs}$ is the bound on the norm of Graver elements from Theorem~\ref{gravernorm}.
Suppose we have guessed some partition $P = \{P_1,\dotsc,P_{k^2}\}$ of the bricks such that of each $P_j$ only
a single brick has non-zero variables in $g_i$. Clearly, the augmenting step $\lambda' y^*$, where
$y^*$ is a $(P, k)$-best step with bounds
$\overline\ell = \lceil\frac{\ell-x}{\lambda'}\rceil$ and $\overline u = \lfloor\frac{u-x}{\lambda'}\rfloor$
would be at least as good as $\lambda' g_i$.
Indeed Theorem \ref{splitters1}  explains how to compute such a $(P, k)$-best step dynamically and
when we add $\lambda' g_j$ to $x$ we only change the bounds of at most $k^3$ many variables. Hence,
it is very efficient to recompute $(P, k)$-best steps until we have converged to the optimal solution.
However, valid choices of $\lambda'$ and $P$ might be different in every iteration.
Regarding $\lambda'$, we simply compute $(P, k)$-best steps for every of the $\cO(\log(\Gamma))$ many guesses and
take the best among them. We proceed similarly for $P$. We guess a small number of partitions
and guarantee that always at least one of them is valid. For this purpose we employ splitters.
More precisely, we compute a $(n, k, k^2)$ splitter of the $n$ bricks.
Since $g_j$ has a norm bounded by $k$, it can also only use at most $k$ bricks. Therefore the
splitter always contains a partition $P = \{P_1,\dotsc,P_{k^2}\}$ where $g_j$ only uses a single brick
in every $P_j$.

To recap, in every iteration we solve a $(P, k)$-best step problem
for every guess $\lambda'$ and every partition $P$ in the splitter and
take the overall best solution as an improving direction $\lambda' y^*$.
Then we update our solution $x$ by adding $\lambda' y^*$ onto it.
At most $k^2$ many bricks change (and within each brick only $k$ variables can change)
and therefore we can efficiently recompute the $(P, k)$-best steps
for every guess for the next iteration.
This way we guarantee that we improve the solution by a factor of at least $1-1/(4nt)$ in every
iteration. The explicit running time of these steps will be analyzed in the next theorem.

\paragraph*{Initial Solution}
Recall that we still have to find an initial solution. This solution indeed can be computed by using the augmenting step algorithm described above.
We construct a new \nfold{} ILP which has a trivial feasible solution and whose optimal solution corresponds
to a feasible solution of the original problem.

First we extend our \nfold{} $\mathcal A$ by adding $(r+s)n$ new columns as follows: After the first block $(A_1, B_1, 0, \dots, 0)^T$ add $r+s$ columns. The first $r$ ones will contain an $r \times r$ identity matrix we call $\mathcal I_r$. This matrix $\mathcal I_r$ has all ones in the diagonal. All other entries are zero. The next $s$ columns will contain an $s \times s$ identity matrix $\mathcal I_s$. This submatrix will start at row $r+1$. Again all other entries are zeros in these columns. After the next block we again introduce $r+s$ new columns, the first $r$ ones containing just zeros, the next an $\mathcal I_s$ matrix at the height of $B_2$. We repeat this procedure of adding $r+s$ columns after each block, the first $r$ having solely zero entries and the next $s$ containing $\mathcal I_s$ at the height of $B_i$ until our resulting matrix $\mathcal A^{\text{init}}$ for finding the initial solution looks like the following:
\begin{equation*}
\mathcal A^{\text{init}} =
\begin{pmatrix}
\begin{tikzpicture}
  \matrix (m)[matrix of math nodes, nodes in empty cells, nodes={minimum width = 0.8cm, minimum height = 0.4cm} ]
   {
A_1		& \mathcal I_r & 	0	& A_2 & 0	& 0 		& \dots		& A_n  	& 0 & 0    \\
B_1		& 0 		& \mathcal I_s & 0 	& 0	& 0 		& \dots  		& 0 	 	& 0 & 0 \\
0		& 0 		& 0 & B_2 & 0	& \mathcal I_s 		& \dots 		& 0	& 0	& 0\\
0		& 0 		& 0 & 0 	& 0	& 0 		& \ddots 		& 0 		& 0 & 0\\
\vdots	& \vdots &  \vdots & \vdots & \vdots & 	\vdots	&	& \vdots & \vdots & \vdots \\
0 		& 0	 	& 0 & 0	& 0	& 0	&	& B_n & 0	& \mathcal I_s \\
};

\draw(m-1-1.south west) rectangle (m-1-4.north west);
\draw(m-1-4.north west) rectangle (m-1-7.south west);
\draw(m-1-8.south west) rectangle (m-1-10.north east);
\draw(m-1-1.south west) rectangle (m-2-4.south west);
\draw(m-2-4.south west) rectangle (m-3-6.south east);
\draw(m-6-8.south west) rectangle (m-6-10.north east);
\end{tikzpicture}
\end{pmatrix}.
\end{equation*}
Due to our careful extension $\mathcal A^{\text{init}}$ has again \nfold{} structure. For clarity the relevant submatrices are framed in the matrix above. Remark that zero entries inside of a block do not harm as solely the zeros outside of the blocks are necessary for an \nfold{} structure. 
At first glance, it seems that for the right-hand side $b$ we now have a trivial solution consisting only of the new columns. Keep in mind, however, that the old variables have upper and lower bounds and that $0$ might
be outside these bounds. In order to handle this case we subtract $\ell$, the lower bound, from
all upper and lower bounds and set the right-hand side to $b' = b - \mathcal A \ell$.
We get an equivalent \nfold{} where every solution is shifted by $\ell$.
Now we can find a feasible solution (for $b'$) using solely the new variables by defining 
\begin{align*}
y' = (0, \dots 0, b'_1, b'_2, \dots b'_{r+s}, 0, \dots 0, b'_{r+s+1}, \dots b'_{r+2s}, 0, \dots 0, b'_{r+ns-s}, b'_{r+ns})^T
\end{align*}
where each non-zero entry corresponds to the columns containing the submatrices $\mathcal I_r$ and $\mathcal I_s$ respectively with a multiplicity of the remaining right-hand side $b'$.
Next we introduce an objective function that penalizes using the new columns by
having non-zero entries $c'_i$ corresponding to the positions of the new variables.
We set
\begin{align*}
c^{\text{init}} = (0, \dots 0, c'_1, \dots c'_{r+s}, 0, \dots 0, c'_{r+ns-s}, \dots, c'_{r+ns}),
\end{align*}
where the zero entries correspond to old variables.
The values $c'_i$ and the lower and upper bounds for the new variables depend on the sign of the right-hand side.
\begin{itemize}
\item If $b'_i \geq 0$, then set $c'_i = -1$, the lower bound to $0$, and the upper bound to $b'_i$.
This way the variable can only be non-negative.
\item If $b'_i < 0$, set $c_i = 1$, the lower bound to $b'_i$ and the upper bound to $0$.
Hence this variable must be non-positive.
\end{itemize}
Clearly a solution has a value of $0$, if and only if none of the new columns are used and
no solution of better value is possible.
Hence, if we use our augmenting step algorithm and solve this problem optimally,
we either find a solution with value $0$ or one with a negative value.
In the former, we indeed have not taken any of the new columns into our solution, therefore we can delete the new columns and obtain a solution for the original problem (after adding $\ell$ to it).
Otherwise, there is no feasible solution for the original problem as we solved the problem optimal regarding the objective function.

\begin{theorem}
The dynamic augmenting step algorithm described above computes an optimal solution for the \nfold{} Integer Linear Program problem in time $(rs\Delta)^{\cO(r^2s + s^2)} \cdot  \cO(L^2 \cdot nt \log^4(nt))$ when finite variable bounds are given for each variable. Here $L$ is the encoding length of the largest occurring number in the input.
\end{theorem}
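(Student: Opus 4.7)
The plan is to combine the classical Graver-best step paradigm with the dynamic $(P,k)$-best step data structure of Theorem~\ref{splitters1}. For every pair consisting of a guess $\lambda' \in \{2^0,2^1,\dots,2^{\lfloor\log\Gamma\rfloor}\}$ and a partition $P$ in a splitter we maintain one such data structure; each augmenting step changes only a handful of variable bounds, so we can update these structures much cheaper than rebuilding them. The total running time is then the number of outer iterations times an amortized per-iteration cost.

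\textbf{Correctness and iteration count.} Let $k=\cO(rs\Delta)^{rs}$ be the Graver $\ell_1$-norm bound from Theorem~\ref{gravernorm}. For the current feasible $x$ and an (unknown) optimum $x^*$, Theorem~\ref{Graver} decomposes $x^*-x$ into at most $2nt-1$ sign-compatible Graver elements, hence some $\lambda_i g_i$ has gain $c^T(\lambda_i g_i)\ge c^T(x^*-x)/(2nt)$. Rounding $\lambda_i$ down to the nearest power of two $\lambda'$ loses at most a factor $2$. By the $(n,k,k^2)$-splitter property (Theorem~\ref{splitComp}), some partition $P$ in the splitter splits the $\le k$ bricks in $\supp(g_i)$ into singletons, so for that partition the $(P,k)$-best step with bounds $\lceil(\ell-x)/\lambda'\rceil,\lfloor(u-x)/\lambda'\rfloor$ is at least as good as $\lambda' g_i$. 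Taking the best candidate over all $(\lambda',P)$ therefore improves the optimality gap by a factor of at least $1-1/(4nt)$. Since the initial gap is bounded by $c_{\max}\cdot u_{\max}\cdot nt\le 2^{\cO(L)}nt$, after $\cO(nt\cdot L)$ iterations the integer gap drops below $1$ and $x$ must be optimal.

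\textbf{Cost per iteration.} The splitter has $k^{\cO(1)}\log n$ partitions and there are $\cO(\log\Gamma)=\cO(L)$ scaling guesses $\lambda'$, giving $\cO(L\log n)\cdot k^{\cO(1)}$ dynamic $(P,k)$-best step structures. An augmenting step $\lambda' y^*$ touches at most $k^2$ bricks and at most $k$ variables per brick, so $\cO(k^3)$ variable bounds change per iteration. Applying the update bound of Theorem~\ref{splitters1} once per changed bound per structure, the per-iteration cost is
\[
  L\cdot k^{\cO(1)}\log(n)\cdot k^3\cdot k^{\cO(r)}\Delta^{\cO(r^2+s^2)}\log(nt)=L\cdot k^{\cO(r)}\Delta^{\cO(r^2+s^2)}\log^{\cO(1)}(nt).
\]
Reading off the overall maximum among the $\cO(L\log n)$ structures and executing the augmenting step is negligible. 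Substituting $k=\cO(rs\Delta)^{rs}$ turns $k^{\cO(r)}\Delta^{\cO(r^2+s^2)}$ into $(rs\Delta)^{\cO(r^2s+s^2)}$.

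\textbf{Totals and initialization.} Multiplying the per-iteration cost by $\cO(ntL)$ iterations yields the claimed bound $(rs\Delta)^{\cO(r^2s+s^2)}\cdot \cO(L^2\cdot nt\log^4(nt))$. The one-shot initialization of the $\cO(L\log n)$ data structures costs, by Theorem~\ref{splitters1}, $(rs\Delta)^{\cO(r^2s+s^2)}\cdot L\log(n)\cdot nt$, which is absorbed. The initial feasible solution is obtained by running the same algorithm on the auxiliary \nfold{} $\mathcal A^{\text{init}}$ introduced above, whose block dimensions grow only by additive constants ($r+s$ extra columns per block) and whose trivial start $y'$ is known; the optimum of that auxiliary problem has value $0$ exactly when the original instance is feasible.

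\textbf{Main obstacle.} The delicate part is the amortization bookkeeping. One has to verify that (i)~for each fixed $\lambda'$ the rescaled bounds $\lceil(\ell-x)/\lambda'\rceil,\lfloor(u-x)/\lambda'\rfloor$ change in only $\cO(k^3)$ coordinates per iteration, so the update-cost of Theorem~\ref{splitters1} is charged only that many times per structure, and (ii)~the splitter really contains, in every iteration, a partition that is admissible for the (possibly different) Graver element chosen by the adversary—which is guaranteed because $\|g_i\|_1\le k$ by Theorem~\ref{gravernorm} and the splitter works for \emph{every} $k$-subset of bricks. Once these two points are in place, the running time is a direct product of the counts above.
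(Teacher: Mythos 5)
Your proposal follows the paper's own proof essentially step for step: Graver decomposition and the $1-1/(4nt)$ gap reduction, powers-of-two guesses for the step length, a splitter to guarantee an admissible partition, the dynamic $(P,k)$-best step structures of Theorem~\ref{splitters1} updated for the $\cO(k^3)$ changed bounds per iteration, and the auxiliary \nfold{} $\mathcal A^{\text{init}}$ for the initial solution. The only point you gloss over is the feasibility phase: there the new slack variables have bounds as large as $\lVert b' \rVert_\infty \le \cO(\Delta\, nt\, 2^L)$, so $\Gamma'$, and with it both the number of $\lambda'$ guesses and the iteration count, pick up an extra $\log(nt)$-type factor beyond your ``$\cO(L)$ guesses'' and ``$\cO(ntL)$ iterations''; this is precisely where the paper's $\log^4(nt)$ (rather than $\log^2(nt)$) comes from, and since $\log\Delta \le L$ the final bound is unaffected, but the accounting for the auxiliary instance should be stated explicitly rather than declared ``absorbed''.
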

\begin{proof}
Due to Theorem \ref{Graver} we know that the difference vector of an optimal solution $x^*$ to our current solution $x$, i.e. $x' = x^* - x$, can be decomposed into $2nt$ weighted Graver basis elements.
Hence, if we adjust our solution $x$ with the Graver best step, we reduce the gap between the value of an optimal solution and our current solution by a factor of at least $1-1/(2nt)$ due to the pigeonhole principle.
Our algorithm finds an augmenting step that is at least half as good as the Graver best step.
Therefore, the gap to the optimal solution is still reduced by at least a factor of $1-1/(4nt)$.

Regarding the running time we first have to compute the splitter. Theorem \ref{splitComp} says, that this can be done in time $k^{\cO(1)} \cdot n \log(n) = (rs\Delta)^{\cO(rs)} \cdot n \log(n)$.
Next we have to try all values for the weight $\lambda$. Due to our step-length we get $\cO(\log(\Gamma))$ guesses.
Recall that $\Gamma$ denotes the largest difference between an upper bound and the corresponding lower bound, i.e.,  $\Gamma = \max_j\{u_j - \ell_j\}$.
Fixing one, we have to find the best improving direction regarding each of the $((rs\Delta)^{\cO(rs)})^{\cO(1)} \log(n) = (rs\Delta)^{\cO(rs)} \log(n)$ partitions.
In the first iteration we have to set up the tables in time $k^{\cO(r)} \cdot \Delta^{\cO(r^2 + s^2)} \cdot nt = (rs\Delta)^{\cO(r^2s)} \cdot \Delta^{\cO(r^2 + s^2)} \cdot nt$ by computing the gain for each possible summand for each set and setting up the data structure.
In each following iteration we update each table and search for the optimum in time $k^{\cO(r)} \cdot \Delta^{\cO(r^2 + s^2)} \cdot \log(nt) = (rs\Delta)^\cO(r^2s) \cdot \Delta^{\cO(r^2 + s^2)} \cdot \log(nt)$. Now it remains to bound the number $I$ of iterations needed to converge to an optimal solution. To obtain such a bound we calculate:
\begin{align*}
1 & > (1-1/(4nt))^I |c^T(x^* - x)| .
\end{align*}
By reordering the term, we get
\begin{align*}
 I & < \frac{-\log(|c^T(x^* - x)|)}{\log(1-1/(4nt))} .
\end{align*}
As $\log(1+x) = \Theta(x)$, we can bound $\log(1-1/(4nt))$ by $\Theta(-1/(4nt))$ and thus
\begin{align*}
I & < \cO( \frac{-\log(|c^T(x^* - x)|)}{-1/(4nt)} ) \leq \cO(4nt \log(|c^T(x^* - x)|)) .
\end{align*}
As the maximal difference between the current solution $x$ and an optimal one $x^*$ can be at most the maximal value of $c$ times the largest number in between the bounds for each variable, we get $|c^T(x^* - x)| \leq nt \max_i |c_i| \cdot \Gamma$ and thus
\begin{align*}
I & < \cO( 4nt \log(|c^T(x^* - x)|) ) \leq \cO(nt \log(nt \max_i |c_i| \cdot \Gamma)) \leq \cO(nt \log(nt\Gamma \max_i |c_i|)) .
\end{align*}
Let $L$ denote the encoding length of largest integer in the input. Clearly $2^L$ bounds the largest absolute value in $c$ and thus we get
\begin{align*}
I & < \cO(nt \log(nt\Gamma \max_i |c_i|)) =  \cO( nt \log(nt\Gamma 2^{L})) = \cO( nt \log(nt\Gamma 2^L)) .
\end{align*}

Hence after this amount of steps by always improving the gain by a factor of at least $1-1/(4nt)$ we close the gap between the initial solution and an optimal one. Given this, we can now bound the overall running time with:
\begin{align*}
&\underbrace{(rs\Delta)^{\cO(rs)} \cdot n \log(n)}_\text{Splitter} + \underbrace{(rs\Delta)^{\cO(rs)}}_\text{Partitions} \cdot   \underbrace{(rs\Delta)^{\cO(r^2s)} \cdot (rs\Delta)^{\cO(r^2+s^2)} \cdot nt}_\text{First Iteration} + \\
& \underbrace{\cO(nt \log(nt\Gamma 2^L))}_\text{$\mathcal I$}  \cdot \underbrace{\cO(\log(\Gamma))}_\text{$\lambda$ Guesses} \cdot \underbrace{(rs\Delta)^{\cO(rs)}}_\text{Partitions} \cdot \underbrace{(rs\Delta)^{\cO(r^2s)} \cdot (rs\Delta)^{\cO(r^2+s^2)} \cdot \log(nt))}_\text{Update Time} \\
&= \cO((nt \log(nt\Gamma 2^L)) \cdot \cO(\log(\Gamma)) \cdot (rs\Delta)^{\cO(r^2s + s^2)} \cdot \log(nt) \\
& =  (rs\Delta)^{\cO(r^2s + s^2)} \cdot  \cO(\log^2(\Gamma + 2^L) \cdot nt \log^2(nt)) .
\end{align*}
Here \emph{Splitter} denotes the time to compute the initial set $\mathcal P$ of partitions and \emph{Partitions} denotes the  cardinality of $\mathcal P$. \emph{First Iteration} is the time to solve the first iteration of the $(P, k)$-best step problem. Further \emph{$\lambda$ Guesses} is the number of guesses we have to do to get the right weight and lastly \emph{Update Time} is the time needed to solve each following $(P, k)$-best step including updating the bounds and data structures. 

Note, that we still have to argue about finding the initial solution,
since in the construction the parameters of the \nfold{} slightly change.
The length of a brick changes to $t' = t + r + s$.
This, however, can be hidden in the $\cO$-Notation of $(rs\Delta)^{\cO(r^2s + s^2)}$.
Further, $\Gamma'$, the biggest difference in upper and lower bounds
can be bounded by a function in $\Gamma, \Delta, L, t$ and $n$.
Recall, that the difference between the bounds of old variables does not change. For the
new variables, however, the difference can be as large as $\lVert b' \rVert_\infty$.
Thus we bound this value by
\begin{equation*}
  \lVert b' \rVert_\infty = \lVert b - \mathcal A \ell \rVert_\infty
  \le \lVert b \rVert_\infty + \lVert \mathcal A \ell \rVert_\infty
  \le \lVert b \rVert_\infty + \Delta \cdot \lVert \ell \rVert_1
  \le \cO(\Delta \cdot nt \cdot 2^L) .
\end{equation*}
We conclude that the running time for finding an initial solution (and also the overall running time) is
\begin{multline*}
(rs\Delta)^{\cO(r^2s + s^2)} \cO(\log^2(\Gamma' + 2^L) nt^2 \log^2(nt)) = (rs\Delta)^{\cO(r^2s + s^2)} \cO(\log^2(\Delta 2^L nt) nt^2 \log^2(nt)) \\
= (rs\Delta)^{\cO(r^2s + s^2)}  \cdot L^2 nt \log^4(nt) . \qedhere
\end{multline*}
\end{proof}

\paragraph*{Handling Infinite Bounds}
Remark, that if no finite bounds are given for all variables, we have to introduce some artificial bounds first. Here we can proceed as in \cite{klein}, where first the LP relaxation is solved to obtain an optimal fractional solution $z^*$. Using the proximity results from $\cite{klein}$, we know that an optimal integral solution $x^*$ exists such that $\norm{x^* - z^*}_1 \leq nt(rs\Delta)^{\cO(rs)}$. This allows us to introduce artificial upper bounds for the unbounded variables. Remark that this comes at the price of solving the corresponding relaxation of the \nfold{} Integer Linear Program problem. However we also lessen the dependency from $L^2$ to $L$ as the finite upper and lower bounds can also be bounded more strictly due to the same proximity result. This yields an overall running time of $(rs\Delta)^{\cO(r^2s + s^2)} \cdot L \cdot nt \log^4(nt) + \mathrm{LP}$.
Nevertheless, solving this LP can be very costly, indeed it is not clear if a potential algorithm even runs in time linear in $n$. Thus, it may even dominate the running time of solving the \nfold{} ILP with finite upper bounds. Fortunately we can circumvent the necessity of solving the LP as we will describe in the following section using new structural results.    

\begin{theorem}
The dynamic augmenting step algorithm described above computes an optimal solution for the \nfold{} Integer Linear Program problem in time $(rs\Delta)^{\cO(r^2s + s^2)} \cdot L \cdot nt \log^4(nt) + LP$ when some variables have infinite upper bounds. Here $LP$ is the running time to solve the corresponding relaxation of the \nfold{} ILP problem. 
\end{theorem}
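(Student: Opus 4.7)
The plan is to reduce the infinite-bound case to the finite-bound case settled by the preceding theorem, paying only an additive $LP$ term while simultaneously saving a factor of $L$ in the main running time.

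First, I would solve the LP relaxation of the given \nfold{} ILP, obtaining an optimal fractional solution $z^*$ in time $LP$. Then I would invoke the proximity result of Eisenbrand et al.~\cite{klein}, which guarantees an optimal integral solution $x^*$ with $\lVert x^* - z^*\rVert_1 \le nt \cdot (rs\Delta)^{\cO(rs)}$. This allows me to replace every (possibly infinite) upper bound $u_i$ by $u_i' := \min\{u_i,\, \lfloor z^*_i\rfloor + nt\,(rs\Delta)^{\cO(rs)}\}$ and symmetrically tighten each lower bound $\ell_i$ to $\ell_i' := \max\{\ell_i,\, \lceil z^*_i\rceil - nt\,(rs\Delta)^{\cO(rs)}\}$. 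Since at least one optimum $x^*$ lies within this window, the new instance has the same optimal value and every new bound is finite. Moreover, the new parameter $\Gamma' := \max_i (u_i' - \ell_i')$ is bounded by $\cO(nt \cdot (rs\Delta)^{\cO(rs)})$.

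Second, I would feed this tightened instance into the preceding theorem. The key observation is the change in how $\log\Gamma$ enters the analysis. In the previous proof the factor $\log^2(\Gamma + 2^L)$ expanded to $\cO(L^2)$ because $\Gamma$ could be as large as $\Delta \cdot nt \cdot 2^L$. Now $\log \Gamma' = \cO(\log(nt) + rs\log(rs\Delta))$, which is absorbed either by $\log(nt)$ factors or by the parameter factor $(rs\Delta)^{\cO(r^2s+s^2)}$. Consequently, in the iteration bound $I \le \cO(nt\log(nt\Gamma' 2^L))$ only the additive $L$ term remains, so $L$ appears exactly once rather than squared. Reassembling the four contributions (number of iterations, guesses of $\lambda$, number of partitions in the splitter, and per-update cost) exactly as in the preceding proof yields the claimed running time of $(rs\Delta)^{\cO(r^2 s + s^2)} \cdot L \cdot nt \log^4(nt)$, plus the $LP$ term paid in the first step.

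The main obstacle is ensuring that the proximity statement can be applied cleanly in the presence of one-sidedly unbounded variables and that the tightening step can be performed in $\cO(nt)$ time after $z^*$ is known, so that it is dominated by the other terms. Beyond that, the proof is a bookkeeping exercise: the augmenting-step algorithm itself is untouched; only the parameter $\Gamma$ that enters its running-time analysis has been reduced from exponential to polynomial in the input parameters, collapsing one of the two $L$ factors.
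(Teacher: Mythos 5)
Your overall route is exactly the paper's: solve the LP relaxation, invoke the proximity bound $\lVert x^*-z^*\rVert_1\le nt(rs\Delta)^{\cO(rs)}$ from~\cite{klein}, replace the infinite (and tighten the finite) bounds by a window of width $\rho=nt(rs\Delta)^{\cO(rs)}$ around $z^*$, and then run the finite-bound augmenting algorithm. However, there is a concrete gap in the run-time bookkeeping. The $L^2$ in the finite-bound theorem is not caused only by the width $\Gamma$ of the given variable bounds; the dominant contribution comes from the \emph{initial-solution phase}, where the auxiliary \nfold{} has slack variables whose bounds are of magnitude $\lVert b'\rVert_\infty$ with $b'=b-\mathcal A\ell$, naively of order $\Delta\cdot nt\cdot 2^L$. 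If you feed your tightened instance into the preceding theorem as a black box and "reassemble the contributions exactly as in the preceding proof", the feasibility phase still has $\Gamma_{\mathrm{aux}}\approx \lVert b-\mathcal A\ell'\rVert_\infty$, which by the naive estimate is still $2^L$-sized; its iteration count times the $\cO(\log\Gamma_{\mathrm{aux}})$ many $\lambda$-guesses then yields a factor $(L+\log(nt))^2$, which is not $\cO(L\log^4(nt))$, so the claimed single factor of $L$ does not follow from your argument as written.

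The missing observation (which your construction already puts within reach) is that your windows are centered at $z^*$ and $\mathcal A z^*=b$, hence $\lVert b-\mathcal A\ell'\rVert_\infty=\lVert \mathcal A(z^*-\ell')\rVert_\infty\le \Delta\cdot nt\cdot\rho=(nt)^2(rs\Delta)^{\cO(rs)}$, i.e. polynomial in $nt$ and the parameters and independent of $L$. Consequently the auxiliary instance has bound-width and right-hand side free of $L$, and since its objective $c^{\mathrm{init}}\in\{-1,0,1\}$, its iteration bound $\cO(nt\log(nt\,\Gamma_{\mathrm{aux}}\max_i|c^{\mathrm{init}}_i|))$ contains no $L$ either; the feasibility phase therefore contributes no $L$ factor at all. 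The single remaining $L$ then comes solely from $\log(\max_i|c_i|)\le L$ in the iteration count of the main phase, while the $\lambda$-guess factor $\cO(\log\Gamma')=\cO(\log(nt)+rs\log(rs\Delta))$ is absorbed into the $\log(nt)$ and parameter terms, giving $(rs\Delta)^{\cO(r^2s+s^2)}\cdot L\cdot nt\log^4(nt)+\mathrm{LP}$ as claimed. With this one additional estimate your proof is complete and coincides with the paper's intended argument.
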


\section{Bounds on $\ell_1$-norm}
\label{sec-bound}
In the following, we prove that even with infinite variable bounds in an \nfold{}
there always exists a solution of small norm (if the \nfold{} has a finite optimum).
Therefore, we can apply the algorithm for finite variable bounds
by replacing every infinite one with this value.
\begin{lemma}\label{bound-feasible}
  If the \nfold{} is feasible and $\overline y$ is some (possibly infeasible) solution
  satisfying the variable bounds, then
  there exists a feasible solution $x$ with
  $\lVert x \rVert_1 \le \cO(rs\Delta)^{rs+1}\cdot (\lVert \overline y \rVert_1 + \lVert b \rVert_1)$
\end{lemma}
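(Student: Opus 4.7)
The plan is to reuse the extended $n$-fold $\mathcal A^{\text{init}}$ from the Initial Solution paragraph and combine Graver's theorem with the Graver-norm bound (Theorems~\ref{Graver} and~\ref{gravernorm}), and then prune the resulting decomposition to obtain a feasible solution of small $\ell_1$-norm.

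First I shift by $\overline y$: substituting $x = x' + \overline y$ transforms the $n$-fold into $\mathcal A x' = \tilde b$ with $\tilde b := b - \mathcal A \overline y$ and shifted bounds $\ell - \overline y \le x' \le u - \overline y$ that contain $0$, since $\overline y$ already lies in the bounds. I then construct $\mathcal A^{\text{init}}$ exactly as in Section~\ref{sec-alg} but with $\tilde b$ in the role of the adjusted right-hand side, choosing bounds for the new variables whose sign matches $\tilde b$; the vector $z^0 := (0, \tilde b)$ is then feasible for $\mathcal A^{\text{init}} z = \tilde b$, and $\mathcal A^{\text{init}}$ is still $n$-fold with parameters $(r, s, t + r + s)$ and entries of absolute value at most $\Delta$.

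Taking any feasible $x^*$ of the original $n$-fold (which exists by assumption), the vector $z^* := (x^* - \overline y, 0)$ is another feasible solution of the extended system, and $z^* - z^0 = (x^* - \overline y, -\tilde b)$ lies in $\mathrm{kern}(\mathcal A^{\text{init}})$. Theorem~\ref{Graver} decomposes it sign-compatibly as $\sum_i \lambda_i g_i$ with $\lambda_i \in \mathbb Z_{\ge 0}$ and, by Theorem~\ref{gravernorm}, $\lVert g_i \rVert_1 \le \cO(rs\Delta)^{rs}$. Let $G$ collect those indices for which $g_i$ has a non-zero new-variable part, and define the pruned state $z := z^0 + \sum_{i \in G} \lambda_i g_i$. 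Graver elements outside $G$ contribute nothing to the new variables, so the new-variable part of $\sum_{i \in G} \lambda_i g_i$ still equals $-\tilde b$ and the new-variable part of $z$ is exactly zero; hence the original-variable part $x'$ satisfies $\mathcal A x' = \tilde b$. Sign-compatibility places the pruned sum componentwise between $0$ and $z^* - z^0$, so $z$ interpolates componentwise between the two feasible points $z^0$ and $z^*$ and is itself feasible.

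It remains to bound $\lVert x' \rVert_1$. Sign-compatibility on the new-variable block gives $\lVert \tilde b \rVert_1 = \sum_{i \in G} \lambda_i \lVert (g_i)_{\text{new}} \rVert_1 \ge \sum_{i \in G} \lambda_i$, because each $(g_i)_{\text{new}}$ for $i \in G$ is a non-zero integer vector and so has $\ell_1$-norm at least $1$. Therefore
\begin{equation*}
  \lVert x' \rVert_1 \le \sum_{i \in G} \lambda_i \lVert g_i \rVert_1 \le \cO(rs\Delta)^{rs} \lVert \tilde b \rVert_1 .
\end{equation*}
Undoing the shift yields the feasible $x := x' + \overline y$, and since every column of $\mathcal A$ has at most $r + s$ non-zero entries of absolute value at most $\Delta$ we have $\lVert \tilde b \rVert_1 \le \lVert b \rVert_1 + (r+s)\Delta \lVert \overline y \rVert_1$. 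Plugging this in produces the desired bound $\lVert x \rVert_1 \le \cO(rs\Delta)^{rs+1}(\lVert \overline y \rVert_1 + \lVert b \rVert_1)$. The main delicate step is verifying that pruning preserves feasibility, and this hinges on sign-compatibility of the Graver decomposition, which guarantees that every partial sum formed from a subset of the $\lambda_i g_i$ stays within the componentwise interval between $z^0$ and $z^*$.
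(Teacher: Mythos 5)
Your proof is correct, but it reaches the key estimate by a different route than the paper. Both arguments use the same extended matrix $\mathcal A^{\text{init}}$ with right-hand side $\tilde b = b - \mathcal A\overline y$ and the same Graver machinery (Theorems~\ref{Graver} and~\ref{gravernorm} applied to the extended \nfold{}, whose parameters $r,s,\Delta$ are unchanged). The paper, however, keeps the penalty objective $c'\in\{-1,0,1\}$ from the initial-solution construction: it takes an \emph{optimal} solution $x^*$ of the extended program (chosen to minimize $\lVert x^*-y\rVert_1$), argues via optimality and discreteness that every Graver element in the decomposition of $x^*-y$ has $c^{\prime T}g_i\ge 1$, and then bounds $\sum_i\lambda_i$ by $c^{\prime T}(x^*-y)=\lVert b'\rVert_1$. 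You instead dispense with the objective entirely: you start from an arbitrary feasible solution of the original \nfold{} (given by hypothesis), decompose $z^*-z^0$, and \emph{prune} to the Graver elements touching the new variables; sign-compatibility then gives both feasibility of the pruned point (it lies in the componentwise box between $z^0$ and $z^*$, and its new-variable block is exactly zero) and the bound $\sum_{i\in G}\lambda_i\le\lVert\tilde b\rVert_1$ via the no-cancellation identity $\lVert\sum_{i\in G}\lambda_i(g_i)_{\text{new}}\rVert_1=\sum_{i\in G}\lambda_i\lVert(g_i)_{\text{new}}\rVert_1$. Your version is more elementary and self-contained: it avoids the optimality-plus-minimal-distance tie-breaking argument and the discreteness of $c'$, at the cost of not reusing the exact algorithmic setup; the paper's version buys uniformity, since the same penalty-objective argument is recycled almost verbatim in the later sensitivity theorem. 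One point worth making explicit if you write this up: the equality of norms above, and the containment of partial sums in the box, both rest on all $g_i$ lying in the same orthant as $z^*-z^0$, which is exactly what sign-compatibility in Theorem~\ref{Graver} provides.
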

\begin{proof}
We take the same construction as in the algorithm for finding a feasible solution in
Section~\ref{sec-alg}. Indeed, this construction was not setup for infinite bounds,
but we consider the straight-forward adaption where infinite bounds simply stay the same.
The useful property is that
an optimal solution for this \nfold{} is a feasible solution for the original \nfold{}.
Recall, the construction has a right-hand side $b'$ with
$\lVert b' \rVert_1 \le \lVert \mathcal A \overline y \rVert_1 + \lVert b \rVert_1$,
the value of $t$ becomes $t' = t + r + s$,
and the objective function $c'$ consists only of the values $\{-1,0,1\}$. 
Moreover, there is a feasible solution $y$ with
$\lVert y \rVert_1 = \lVert b' \rVert_1$.
Let $x^*$ be an optimal solution for this altered \nfold{}
that minimizes $\lVert x^* - y \rVert_1$.
We consider the decomposition into Graver elements
$\sum_{i=1}^{2n(t+r+s)-1} \lambda_i g_i = x^* - y$.
Then $c^{\prime T} g_i > 0$ or $\lambda_i = 0$ for all $i$,
since otherwise $x^* - g_i$ would be a better solution than $x^*$.
It follows that $c^{\prime T} g_i \ge 1$ by discreteness of $c$.
Also, by Theorem~\ref{gravernorm}, $\lVert g_i \rVert_1 \le \cO(rs\Delta)^{rs}$.
Recall that by construction $c^{\prime T} x^* = 0$ and $c^{\prime T} y = - \lVert b' \rVert_1$,
which implies $\sum_{i} \lambda_i \le \lVert b' \rVert_1$.
Therefore,
\begin{multline*}
  \lVert x^* \rVert_1 \le \lVert y \rVert_1 + \lVert \sum_{i} \lambda_i g_i \rVert_1
  \le \lVert b' \rVert_1 + \sum_{i} \lambda_i \lVert g_i \rVert_1 \\
  \le \lVert b' \rVert_1 + \lVert b' \rVert_1 \cdot \cO(rs\Delta)^{rs}
  \le (\lVert b \rVert_1 + \lVert \overline y \rVert_1) \cdot \cO(rs\Delta)^{rs+1} .
\end{multline*}
Here we use that $ \lVert b' \rVert_1 \le \lVert b \rVert_1 + \lVert \mathcal A \overline y \rVert \le \lVert b \rVert_1 + (r + s)\Delta \cdot \lVert \overline y \rVert_1$.
\end{proof}
\begin{lemma}
  If the \nfold{} is bounded and feasible, then
  there exists an optimal solution $x$ with $\lVert x \rVert_1 \le (rs\Delta)^{\cO(rs)}\cdot (\lVert b \rVert_1 + nt\zeta)$, where $\zeta$ denotes the largest absolute value
among all finite variable bounds.
\end{lemma}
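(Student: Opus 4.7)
The plan is to combine Lemma~\ref{bound-feasible}, which already gives a small-norm feasible starting point, with a Graver-basis argument to upgrade it to an optimum. I will first construct a reference vector $\overline y$ satisfying the variable bounds with $\lVert \overline y \rVert_1 \le nt\zeta$: for each coordinate, pick the bound-compatible value of minimum absolute value (either $0$ itself, or the finite endpoint of the box closest to $0$, whose absolute value is at most $\zeta$ by definition of $\zeta$). Applying Lemma~\ref{bound-feasible} then produces a feasible solution $x^0$ with $\lVert x^0 \rVert_1 \le \cO(rs\Delta)^{rs+1}(\lVert b \rVert_1 + nt\zeta)$.

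Among all optimal integer solutions I will pick $x^*$ minimizing $\lVert x^* - x^0 \rVert_1$, then invoke Theorem~\ref{Graver} to decompose $x^* - x^0 = \sum_i \lambda_i g_i$ into sign-compatible Graver elements with $\lambda_i \in \mathbb Z_{>0}$ and $\lVert g_i \rVert_1 \le \cO(rs\Delta)^{rs}$ (Theorem~\ref{gravernorm}). Two properties will drive the argument. First, $c^T g_i \ge 1$ for every used $g_i$: sign-compatibility implies $x^* - g_i$ lies componentwise between $x^0$ and $x^*$ and is thus feasible, so $c^T g_i < 0$ would contradict optimality of $x^*$ and $c^T g_i = 0$ would contradict minimality of $\lVert x^* - x^0 \rVert_1$; combined with integrality of $c$ this yields $c^T g_i \ge 1$. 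Second, each $g_i$ has a coordinate $j_i$ with $g_{i,j_i} \ne 0$ pointing toward a \emph{finite} variable bound, because otherwise positive integer multiples of $g_i$ would form a ray of the feasible region with strictly positive objective, contradicting the boundedness hypothesis.

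The heart of the proof will bound $\sum_i \lambda_i$. For any coordinate $j$, sign-compatibility makes the contributions $\lambda_i g_{i,j}$ add without cancellation, hence $\sum_i \lambda_i |g_{i,j}| = |(x^* - x^0)_j|$. Whenever $j = j_i$ for some $i$, the value $|x^*_j|$ is capped at $\zeta$ on the active side, so $|(x^*-x^0)_j| \le \zeta + |(x^0)_j|$. Grouping by the assignment $i \mapsto j_i$ and using $|g_{i,j_i}| \ge 1$:
\[
\sum_i \lambda_i \;\le\; \sum_j \sum_{i : j_i = j} \lambda_i |g_{i,j}|
\;\le\; \sum_j (\zeta + |(x^0)_j|)
\;\le\; nt\zeta + \lVert x^0 \rVert_1 .
\]

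Combining $\lVert x^* - x^0 \rVert_1 \le \max_i \lVert g_i \rVert_1 \cdot \sum_i \lambda_i$ with the triangle inequality and the bound on $\lVert x^0 \rVert_1$ then yields $\lVert x^* \rVert_1 \le (rs\Delta)^{\cO(rs)}(\lVert b \rVert_1 + nt\zeta)$, as desired. The main obstacle is the second property of the Graver decomposition: one has to translate the \emph{global} boundedness hypothesis into a \emph{local}, per-element finite-bound coordinate $j_i$. Once this is secured, sign-compatibility converts the per-coordinate capacity $\zeta + |(x^0)_j|$ into the right bound on $\sum_i \lambda_i$ essentially for free.
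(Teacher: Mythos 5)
Your proof is correct and follows essentially the same route as the paper: start from a bound-satisfying $\overline y$ with $\lVert \overline y\rVert_1 \le nt\zeta$, apply Lemma~\ref{bound-feasible} to get a small-norm feasible point, decompose the difference to a carefully chosen optimum into Graver elements, use boundedness to force each element to touch a finitely bounded coordinate, and bound $\sum_i \lambda_i$ by $nt\zeta$ plus the norm of the feasible point. The only differences are cosmetic: you minimize $\lVert x^* - x^0\rVert_1$ instead of $\lVert x^*\rVert_1$ and replace the paper's pigeonhole contradiction (after a sign-flip w.l.o.g.) by a direct grouping argument over the coordinates $j_i$, both of which are sound.
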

\begin{proof}
Clearly there exists a (possibly infeasible) solution $\overline y$ satisfying the bounds with
$\lVert \overline y \rVert_1 \le nt\zeta$.
By the previous lemma we know that there is a feasible solution $y$ with
$\lVert y \rVert_1 \le (rs\Delta)^{\cO(rs)} \cdot (\lVert b \rVert_1 + nt\zeta)$.
Let $x^*$ be an optimal solution of minimal norm.
W.l.o.g. assume that $x^* - y$ has only non-negative entries. If there is a negative entry,
consider the equivalent \nfold{} problem with the corresponding column inverted and its bounds inverted and swapped.

We know that there is a decomposition of $x^* - y$ into
weighted Graver basis elements $\sum_{i=1}^{2nt-1} \lambda_i g_i = x^* - y$.
Since every $g_i$ is sign-compatible with $x^* - y$,
we have that all $g_i$ are non-negative as well.
Furthermore, it holds that $c^T g_i > 0$ or $\lambda_i=0$ for every $g_i$, since otherwise
$x^* - g_i$ would be a solution
of smaller norm with an objective value that is not worse.
Now suppose toward contradiction that there is some $g_i$ where all variables
in $\supp(g_i)$ have infinite upper bounds.
Then the \nfold{} is clearly unbounded, since $y + \alpha \cdot g_i$ is feasible for every $\alpha > 0$
and in this way we can scale the objective value beyond any bound.
Thus, every Graver basis element adds at least the value $1$ to some
finitely bounded variable.
This implies that $\sum_{i} \lambda_i \le \lVert y \rVert_1 + nt\zeta$: If not, then by pigeonhole principle
there is some finitely bounded variable $x^*_j$ with
\begin{equation*}
  x^*_j = y_j + (\sum_{i} \lambda_i g_i)_j > y_j + \zeta + |y_j| \ge \zeta .
\end{equation*}
Since $x^*$ is feasible, this cannot be the case. We conclude,
\begin{multline*}
  \lVert x^* \rVert_1 \le \lVert y \rVert_1 + \sum_{i} \lambda_i \lVert g_i \rVert_1
  \le \lVert y \rVert_1 + \cO(rs\Delta)^{rs} \cdot \sum_{i} \lambda_i \\
  \le \cO(rs\Delta)^{rs} \cdot (\lVert y \rVert_1 + nt\zeta)
  \le (rs\Delta)^{\cO(rs)} \cdot (\lVert b \rVert_1 + nt\zeta) . \qedhere
\end{multline*}
\end{proof}
This yields an alternative approach to solving the LP relaxation, because now we can
simply replace all infinite bounds with $\pm (rs\Delta)^{\cO(rs)} \cdot n t \cdot 2^L$.
Then we can apply the algorithm that works only on finite variable bounds.
The new encoding length $L'$ of the largest integer in the input can be bounded by
\begin{equation*}
L' \le \log((rs\Delta)^{\cO(rs)} \cdot 2^L \cdot n t) \le \cO(rs\cdot \log(rs\Delta) \cdot L \cdot \log(nt)) .
\end{equation*}
This way we obtain the following.
\begin{corollary}
We can compute an optimal solution for an \nfold{} in time
$(rs\Delta)^{\cO(r^2s + s^2)} L^2 \cdot nt \log^6(nt)$.
\end{corollary}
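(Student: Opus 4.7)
The plan is to reduce the infinite-bounds case to the finite-bounds case by using the preceding lemma to impose artificial bounds, then invoke the finite-bounds algorithm as a black box. First I would note that by the preceding lemma, every bounded and feasible \nfold{} admits an optimal solution $x$ with $\lVert x \rVert_1 \le (rs\Delta)^{\cO(rs)}(\lVert b \rVert_1 + nt\zeta)$, where $\zeta$ denotes the largest absolute value among the finite bounds. Since both $\lVert b\rVert_\infty$ and $\zeta$ are bounded by $2^L$, this norm bound is at most $(rs\Delta)^{\cO(rs)}\cdot nt\cdot 2^L$. Hence I can safely replace every infinite upper bound with $+(rs\Delta)^{\cO(rs)}\cdot nt\cdot 2^L$ and every infinite lower bound with its negation, without discarding any optimum.

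Next I would apply the finite-bounds algorithm to this modified instance. Its running time is $(rs\Delta)^{\cO(r^2s+s^2)}\cdot L'^2\cdot nt\log^4(nt)$, where $L'$ is the encoding length of the largest integer in the \emph{modified} input. The only integers whose magnitude has grown are the artificial bounds, so
\[
L' \le \log\!\bigl((rs\Delta)^{\cO(rs)}\cdot nt\cdot 2^L\bigr) \le \cO\bigl(rs\log(rs\Delta)\cdot L\cdot \log(nt)\bigr).
\]
Squaring contributes a factor of $(rs)^2\log^2(rs\Delta)\cdot L^2\cdot \log^2(nt)$. The polynomial factor $(rs)^2\log^2(rs\Delta)$ is dwarfed by the existing $(rs\Delta)^{\cO(r^2s+s^2)}$ term and can be absorbed into it, while the $\log^2(nt)$ factor combines with the $\log^4(nt)$ already present to yield $\log^6(nt)$. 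Collecting everything gives exactly the claimed bound $(rs\Delta)^{\cO(r^2s+s^2)}L^2\cdot nt\log^6(nt)$.

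The only real subtlety I expect is the bookkeeping in controlling $\lVert b\rVert_1 + nt\zeta$ by a multiple of $nt\cdot 2^L$, which is immediate from the definition of $L$ as the encoding length of the largest integer in the input. Apart from this, the proof is a routine composition: cite the structural lemma to justify the finite artificial bounds, invoke the finite-bounds theorem as a black box, and verify that the extra polylogarithmic factors introduced by the substitution collapse to $\log^6(nt)$. No new algorithmic idea is required; all of the technical work has already been done in the lemma and in the finite-bounds algorithm.
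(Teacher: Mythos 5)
Your proposal is correct and follows essentially the same route as the paper: replace the infinite bounds by $\pm(rs\Delta)^{\cO(rs)}\cdot nt\cdot 2^L$ justified by the norm bound of the preceding lemma, run the finite-bounds algorithm, and observe that the new encoding length $L' \le \cO(rs\log(rs\Delta)\cdot L\cdot\log(nt))$ turns the $L^2\log^4(nt)$ factor into $L^2\log^6(nt)$ with the polynomial prefactor absorbed into $(rs\Delta)^{\cO(r^2s+s^2)}$. No substantive difference from the paper's argument.
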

In a similar way, we can derive the following bound on the sensitivity of an \nfold{} ILP.
This bound is not needed in our algorithm, but may be of independent interest, since it
implies small sensitivity for problems that can be expressed as \nfold{}.
 \begin{theorem}
 Let $x$ be an optimal solution of an \nfold{} with right-hand side $b$, in particular, $\mathcal Ax = b$. If the right hand side changes to $b'$ and the \nfold{} still has a finite optimum, then there exists an optimal solution $x'$ for $b'$ ($\mathcal Ax' = b'$) with $\lVert x - x' \rVert_1 \le \cO(rs\Delta)^{rs} \cdot \lVert b - b' \rVert_1$.
 \end{theorem}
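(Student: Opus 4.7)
The plan is to apply a sign-compatible Graver decomposition on a cleverly augmented $n$-fold and use the joint optimality of $x$ (for $b$) together with a carefully chosen optimum for $b'$ to eliminate any kernel-only contributions in the decomposition. First, pick $x^*$ optimal for $b'$ minimizing $\lVert x^* - x \rVert_1$ and set $z^* := x^* - x$, $d := b' - b$, so that $\mathcal A z^* = d$ and $z^*$ lies in the shifted box $\ell - x \le z^* \le u - x$ (which contains the origin, since $x$ is feasible). Now pass to the augmented $n$-fold matrix $\mathcal A^{\mathrm{init}}$ built in Section~\ref{sec-alg}: it attaches one slack variable per equation inside the $n$-fold block structure, preserves the parameters $r, s, \Delta$, and therefore by Theorem~\ref{gravernorm} its Graver basis elements have $\ell_1$-norm at most $K := \cO(rs\Delta)^{rs}$. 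Let $\tilde d$ be the slack vector that stores the corresponding entries of $d$. Then $(z^*, -\tilde d)$ lies in $\mathrm{kern}(\mathcal A^{\mathrm{init}})$, and by Theorem~\ref{Graver} it admits a sign-compatible decomposition $(z^*, -\tilde d) = \sum_i \lambda_i (z_i, \alpha_i)$ into Graver elements of $\mathcal A^{\mathrm{init}}$.

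The main obstacle is ruling out indices with $\alpha_i = 0$. Such a term corresponds to an ordinary Graver element $z_i \in \mathrm{Graver}(\mathcal A)$ sign-compatible with $z^*$, and sign-compatibility moreover gives $\lambda_i\lvert z_{i,j}\rvert \le \lvert z^*_j\rvert$ for every coordinate $j$. Consequently both $x + z_i$ and $x^* - z_i$ lie componentwise between $x$ and $x^*$, so they respect the original bounds; being kernel shifts of $x$ and $x^*$ respectively, they are also feasible for $b$ and $b'$. Optimality of $x$ then forces $c^T z_i \le 0$ and optimality of $x^*$ forces $c^T z_i \ge 0$, hence $c^T z_i = 0$. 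But then $x^* - z_i$ is an optimal solution for $b'$ with $\lVert (x^* - z_i) - x\rVert_1 < \lVert z^*\rVert_1$, contradicting the choice of $x^*$. Therefore every surviving term satisfies $\alpha_i \ne 0$ and in particular $\lVert \alpha_i\rVert_1 \ge 1$.

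Sign-compatibility of the $\alpha_i$ with $-\tilde d$ yields $\sum_i \lambda_i \lVert\alpha_i\rVert_1 = \lVert\tilde d\rVert_1 = \lVert d\rVert_1$, so $\sum_i \lambda_i \le \lVert d\rVert_1$. Sign-compatibility of the $z_i$ with $z^*$ similarly gives $\lVert z^*\rVert_1 = \sum_i \lambda_i \lVert z_i\rVert_1 \le K \sum_i \lambda_i \le \cO(rs\Delta)^{rs}\lVert b - b'\rVert_1$, so $x' := x^*$ satisfies the claimed bound. The delicate ingredient is the feasibility check inside the contradiction, which relies solely on the coordinatewise inequality $\lambda_i\lvert z_{i,j}\rvert \le \lvert z^*_j\rvert$ that is automatically built into any sign-compatible Graver decomposition.
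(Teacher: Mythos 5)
Your proof is correct and follows essentially the same route as the paper's: both work in the augmented matrix $\mathcal A^{\text{init}}$ with identity slack columns, decompose the difference of the two solutions into sign-compatible Graver elements, rule out slack-free elements by playing the optimality of $x$ against the minimal-distance choice of the optimum for $b'$, and bound $\sum_i \lambda_i$ by $\lVert b - b'\rVert_1$ via the identity block. The only differences are cosmetic: you derive $c^T z_i = 0$ from both optimalities before invoking minimality (the paper splits into two cases), and you spell out the coordinatewise box-containment argument for feasibility that the paper leaves implicit.
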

It is notable that this bound does not depend on $n$. This is in contrast to the known bounds
for the distance between LP and ILP solutions of an \nfold{}~\cite{klein}.
\begin{proof}
  Consider the matrix $\mathcal A^\text{init}$ from the
  construction used for finding an initial solution, that is,
  identity matrices are added after every block.
  As opposed to
  the proof of Lemma~\ref{bound-feasible}, we leave everything except for the matrix the same.
  In particular, we do not change the value in the objective function $c$
  and new columns get a value of $0$.
  As the right-hand side of the \nfold{} we use $b'$.
  For some solution $x$, we write $x_\text{old}$ and $x_\text{new}$ for the vector restricted
  to the old variables (with all others $0$)
  and the variables added in the matrix $\mathcal A^\text{init}$, respectively.
  This means $x = x_\text{old} + x_\text{new}$.

  Let $x$ be an optimal solution with $\mathcal A^\text{init} \cdot x_\text{new} = b' - b$
  and $x'$ one with $\mathcal A^\text{init} \cdot x'_\text{new} = 0$.
  Here we assume that $x'$ is chosen so as to minimize $\lVert x - x' \rVert_1$.
  Those solutions naturally correspond to solutions of the original \nfold{} with right-hand side
  $b = b' - (\mathcal A^\text{init} \cdot x_\text{new})$ and $b' = b' - \mathcal A^\text{init} \cdot x'_\text{new}$.
  Let $\sum_{i=1}^{2n(t+r+s)-1} \lambda_i g_i = x' - x$ be the decomposition into Graver basis
  elements.
  Suppose toward contradiction there is some $g_i$ where all of $\supp(g_i)$ are old variables.
  If $c^T g_i > 0$, then $x$ is not optimal, because $x + g_i$ is feasible and has a better
  objective value. If on the other hand $c^T g_i \le 0$, then $x' - g_i$ is a solution
  of at least the same value as $x'$ and thus $\lVert x - x' \rVert_1$ is not minimal.
  Indeed, this means $\lVert (g_i)_\text{new} \rVert_1 \ge 1$ for all $g_i$. In other words,
  each graver element contains a non-zero new variable.
  Recall that $\mathcal A^\text{init}$ is the identity matrix when restricted to the new variables
  (plus some zero columns).
  Due to the sign-compatibility we get
  \begin{equation*}
    \sum_i \lambda_i \le
    \lVert (\sum_{i} \lambda_i g_i)_\text{new} \rVert_1
    = \lVert \mathcal A^\text{init} \cdot (\sum_{i} \lambda_i g_i)_\text{new} \rVert_1
    = \lVert \mathcal A^\text{init} \cdot (x' - x)_\text{new} \rVert_1
    = \lVert b - b' \rVert_1 .
  \end{equation*}
  We conclude,
  \begin{equation*}
    \lVert x - x' \rVert_1 = \lVert \sum_{i} \lambda_i g_i \rVert_1 
    \le \cO(rs\Delta)^{rs} \cdot \sum_{i} \lambda_i \le \cO(rs\Delta)^{rs} \cdot \lVert b - b' \rVert_1 .\qedhere
  \end{equation*}
\end{proof}

\bibliography{ref}
\end{document}